\newcommand{\hide}[1]{}
\newcommand{\set}[1]{\left\{#1\right\}}
\newcommand{\ceil}[1]{\left\lceil #1 \right\rceil}
\renewcommand{\Pr}{\mathbb{P}}
\newcommand{\eps}{\varepsilon}
\newcommand{\calT}{\mathcal{T}}
\newcommand{\calS}{\mathcal{S}}
\renewcommand{\paragraph}[1]{\medskip\noindent{\bf #1\ }}
\author{Bernhard Haeupler\\MIT CSAIL\\Cambridge, MA, USA\\
\texttt{haeupler@mit.edu}%
 \and %
Fabian Kuhn\\ Dept.\ of Computer Science\\
University of Freiburg, Germany
\\\texttt{kuhn@informatik.uni-freiburg.de}}
\date{}
\begin{document}

\title{Lower Bounds on Information Dissemination in Dynamic Networks}
\author{Bernhard Haeupler\inst{1} \and Fabian Kuhn\inst{2}}
\institute{Computer Science and Artificial Intelligence Lab, MIT, USA\\
\email{haeupler@mit.edu} 
\and
Dept.\ of Computer Science, University of Freiburg, Germany
\email{kuhn@cs.uni-freiburg.de}}
\date{}

\maketitle

\begin{abstract}
    We study lower bounds on information dissemination in adversarial
    dynamic networks.
    Initially, $k$ pieces of information (henceforth called tokens)
    are distributed among $n$ nodes. The tokens need to be broadcast
    to all nodes through a synchronous network in which the topology
    can change arbitrarily from round to round provided that some
    connectivity requirements are satisfied.

    If the network is guaranteed to be connected in every round and
    each node can broadcast a single token per round to its neighbors,
    there is a simple token dissemination algorithm that manages to
    deliver all $k$ tokens to all the nodes in $O(nk)$
    rounds. Interestingly, in a recent paper, Dutta et al. proved an
    almost matching $\Omega(n+nk/\log n)$ lower bound for
    deterministic token-forwarding algorithms that are not allowed to
    combine, split, or change tokens in any way. In the present paper,
    we extend this bound in different ways.

    If nodes are allowed to forward $b\leq k$ tokens instead of only
    one token in every round, a straight-forward extension of the
    $O(nk)$ algorithm disseminates all $k$ tokens in time
    $O(nk/b)$. We show that for any randomized token-forwarding
    algorithm, $\Omega(n + nk/(b^2\log n\log\log n))$ rounds are
    necessary. If nodes can only send a single token per round, but we
    are guaranteed that the network graph is $c$-vertex connected in
    every round, we show a lower bound of $\Omega(nk/(c\log^{3/2}n))$, which
    almost matches the currently best $O(nk/c)$ upper bound. Further,
    if the network is $T$-interval connected, a notion that captures
    connection stability over time, we prove that $\Omega(n +
    nk/(T^2\log n))$ rounds are needed. The best known upper bound in
    this case manages to solve the problem in $O(n + nk/T)$
    rounds. Finally, we show that even if each node only needs to
    obtain a $\delta$-fraction of all the tokens for some $\delta\in
    [0,1]$, $\Omega(nk\delta^3/\log n)$ are still required.
\end{abstract}

\section{Introduction}

The growing abundance of (mobile) computation and communication
devices creates a rich potential for novel distributed systems and
applications. Unlike classical networks, often the resulting networks
and applications are characterized by a high level of churn and,
especially in the case of mobile devices, a potentially constantly
changing topology. Traditionally, changes in a network have been
studied as faults or as exceptional events that have to be tolerated
and possibly repaired. However, particularly in mobile applications,
dynamic networks are a typical case and distributed algorithms
have to properly work even under the assumption that the 
topology is constantly changing.

Consequently, in the last few years, there has been an increasing
interest in distributed algorithms that run in dynamic
systems. Specifically, a number of recent papers investigate the
complexity of solving fundamental distributed computations and
information dissemination tasks in dynamic networks, e.g.,
\cite{avin08,baumann09,clementi08,clementi09,cornejo12,LBarxiv,odell05,HK,KLO,KMO,KOSurvey}.
Particularly important in the context of this paper is the
synchronous, adversarial dynamic network model defined in
\cite{KLO}. While the network consists of a fixed set of participants
$V$, the topology can change arbitrarily from round to round, subject
to the restriction that the network of each round needs to be
connected or satisfy some stronger connectivity requirement.

We study lower bounds on the problem of disseminating a
bunch of tokens (messages) to all the nodes in a dynamic network as
defined in \cite{KLO}.\footnote{To be in line with \cite{KLO} and
    other previous work, we refer to the information pieces to be
    disseminated in the network as tokens.} Initially $k$ tokens are
placed at some nodes in the network. 
Time is divided into
synchronous rounds, the network graph of every round is connected, and
in every round, each node can broadcast one token to all its
neighbors. If in addition, all nodes know the size of the network $n$,
we can use the following basic protocol to broadcast all $k$ tokens to
all the nodes. The tokens are broadcast one after the other such that
for each token during $n-1$ rounds, every node that knows about the
token forwards it. Because in each round, there has to be an edge
between the nodes knowing the token and the nodes not knowing it, at
least one new node receives the token in every round and thus, after
$n-1$ rounds, all nodes know the token. Assuming that only one token
can be broadcast in a single message, the algorithm requires $k(n-1)$
rounds to disseminate all $k$ tokens to all the nodes.

Even though the described approach seems almost trivial, as long as we
do not consider protocols based on network coding, $O(nk)$ is the best
upper bound known.\footnote{In fact, if tokens and thus also messages
    are restricted to a polylogarithmic number of bits, even network
    coding does not seem to yield more than a polylog.\
    improvement \cite{AnalyzingNC,HK}.} In \cite{KLO}, a
\emph{token-forwarding algorithm} is defined as an algorithm that
needs to forward tokens as they are and is not allowed to combine or
change tokens in any way. Note that the algorithm above is a
token-forwarding algorithm. In a recent paper, Dutta et al.\ show that
for deterministic token-forwarding algorithms, the described simple
strategy indeed cannot be significantly improved by showing a lower
bound of $\Omega(nk/\log n)$ rounds \cite{LBarxiv}. Their lower bound
is based on the following observation. Assume that initially, every
node receives every token for free with probability $1/2$
(independently for all nodes and tokens). Now, with high probability,
whatever tokens the nodes decide to broadcast in the next round, the
adversary can always find a graph in which new tokens are learned
across at most $O(\log n)$ edges. Hence, in each round, at most
$O(\log n)$ tokens are learned.  Because also after randomly assigning
tokens with probability $1/2$, overall still roughly $nk/2$ tokens are
missing, the lower bound follows. 
We extend the
lower bound from \cite{LBarxiv} in various natural
directions. Specifically, we make the contributions listed in the
following. All our lower bounds hold for deterministic algorithms and
for randomized algorithms assuming a strongly adaptive adversary (cf.\
Section \ref{sec:model}). Our results are also summarized in
\Cref{table:bounds} which is discussed in Section \ref{sec:related}.

\paragraph{Multiple Tokens per Round:} Assume that instead of
forwarding a single token per round, each node is allowed to forward
up to $1<b\leq k$ tokens in each round. In the simple token-forwarding
algorithm that we described above, we can then forward a block of $b$
tokens to every node in $n-1$ rounds and we therefore get an
$O\big(\frac{nk}{b}\big)$ round upper bound.  We show that every
(randomized) token-forwarding algorithm needs at least $\Omega\big(n
+ \frac{nk}{b^2\log n\log\log n}\big)$ rounds. 

\paragraph{Interval Connectivity:} It is natural to assume that a
dynamic network cannot change arbitrarily from round to round and that
some paths remain stable for a while. This is formally captured by the
notion of interval connectivity as defined in \cite{KLO}. A network is
called $T$-interval connected for an integer parameter $T\geq 1$ if
for any $T$ consecutive rounds, there is a stable connected
subgraph. It is shown in \cite{KLO} that in a $T$-interval connected
dynamic network, $k$-token dissemination can be solved in $O\big(n
+\frac{nk}{T}\big)$ rounds. In this paper, we show that every
(randomized) token-forwarding algorithm needs at least $\Omega\big(n
+ \frac{nk}{T^2\log n}\big)$ rounds. 

\paragraph{Vertex Connectivity:} If instead of merely requiring that
the network is connected in every round, we assume that the network is
$c$-vertex connected in every round for some $c>1$, we can also obtain
a speed-up. Because in a $c$-vertex connected graph, every vertex cut
has size at least $c$, if in a round all nodes that know a token $t$
broadcast it, at least $c$ new nodes are reached. The basic
token-forwarding algorithm thus leads to an $O\big(\frac{nk}{c}\big)$
upper bound. We prove this upper bound tight up to a small factor by
showing an $\Omega\big(\frac{nk}{c\log^{3/2}n}\big)$ lower bound.

\paragraph{\boldmath$\delta$-Partial Token Dissemination:} 
Finally 
we consider the basic model, but relax the requirement on the problem
by requiring that every node needs to obtain only a $\delta$-fraction
of all the $k$ tokens for some parameter $\delta\in[0,1]$. We show
that even then, at least $\Omega\big(\frac{nk\delta^3}{\log n}\big)$
rounds are needed. This also has implications for algorithms that use
forward error correcting codes (FEC) to forward coded packets instead
of tokens.  We show that such algorithms still need at least
$\Omega\big(n + k\big(\frac{n}{\log n}\big)^{1/3}\big)$ rounds until
every node has received enough coded packets to decode all $k$
tokens.  


\section{Related Work}
\label{sec:related}

\begin{table}[t]
\centering
\begin{tabular}{ | l || c | c | c | } \hline
    & \ TF Alg. \cite{KLO} \  & \ NC Alg. \cite{AnalyzingNC,HK,HM}\  & \ TF Lower Bound \ \\ \hline \hline 
\multirow{2}{*}{always connected} & \multirow{2}{*}{$nk$ } 	&  \
\ \ \ \ \ \ \ \  $O(\frac{nk}{\log n})$ \ \ \ \ \ \ \small (1)  &  \
\ \ \ \ \ \ \ \   $\Omega(n\log k)$ \ \ \ \ \ \ \ \ \small\cite{KLO}\\ 
 & &  \ \ \ \ \ \ \ \ \  $O(n+k)$ \ \ \ \ \small (2) &  \ \ \ \ \ \ \ \ \   $\Omega(\frac{nk}{\log n})$ \ \ \ \ \ \ \ \ \ \ \ \ \small \cite{LBarxiv} \\ \hline

$T$-interval conn.$^+$  	& \multirow{2}{*}{\ \ \ \ \ \ \ \ 
    $\ \,\frac{nk}{T}$  \ \small (+,*)  } 	& \ \ \ \ \ \
$\ \ \,\approx O(n+\frac{nk}{T^2})$ \, \small (1,*)& \multirow{2}{*}{\ \ \ \
    {\boldmath$\Omega(\frac{nk}{T^2 \log n})$} \ \ \small (+) } \\ 

$T$-stability$^*$	&  	& \ \ \ \ \ \ \ \ \ \ 	$O(n + k)$ \ \ \ \ \small (2) &   \\ \hline

\multirow{2}{*}{always $c$-connected}  			& \multirow{2}{*}{$\frac{nk}{c}$ } 	&  \ \ \ \ \ \ \ \ \ \ 	$O(\frac{nk}{c\log n})$ \ \ \,\, \small  (1) &  {\boldmath$\Omega(\frac{nk}{c\log^{3/2} n})$} \\ 

			& &	 \ \ \ \ \ \ \ \ \ \ $O(\frac{n +
                            k}{c})$ \ \ \ \ \ \ \ \small (2)& {
                            {\boldmath$\Omega(\frac{nk}{c^2\log n})$}
                            }  \\ \hline

\multirow{2}{*}{$b$-token packets}  			&
\multirow{2}{*}{$\frac{nk}{b}$ } 	& \ \ \ \ \ \  	$\approx
O(\frac{nk}{b^2 \log n})$ \ \ \ \small (1)&  \multirow{2}{*}{\boldmath$\Omega(\frac{nk}{b^2 \log n \log \log n})$}  \\ 

  			&  	&\ \ \ \ \ \ \ \ \ \ 	$O(n + \frac{k}{b})$ \ \ \ \ \small (2)&   \\ \hline

\multirow{2}{*}{$\delta$-partial token diss.} & \multirow{2}{*}{$\delta nk$} 	&\ \ \ \ \ \ \ \ \ \ 	 $O(\frac{\delta nk}{\log n})$ \ \ \ \ \ \ \ \small (1)  &  \multirow{2}{*}{ {\boldmath$\Omega(\frac{\delta^3 nk}{\log n})$} }\\ 
 & & \ \ \ \ \ \ \ \ \ \ $O(n+\delta k)$ \ \,\ \small (2) & \\ \hline
\end{tabular}\\[1mm]
\caption{Upper and lower bounds for token forwarding (TF) algorithms
    and network coding (NC) based solutions (bounds in bold are proven
    in this paper). All TF algorithms are
    distributed and deterministic while all lower bounds are for
    centralized randomized algorithms and a strongly adaptive
    adversary. The NC algorithms work either in the distributed
    setting against a (standard) adaptive adversary (1) or in the
    centralized setting against a strongly adaptive adversary (2).} 
\label{table:bounds}
\end{table}

As stated in the introduction, we use the network model introduced in
\cite{KLO}. That paper studies the complexity of computing basic
functions such as counting the number of nodes in the network, as well
as the cost the token dissemination problem that we investigate in the
present paper. Previously, some basic results of the same kind were
also obtained in \cite{odell05} for a similar network model. 

The token dissemination problem as studied here is first considered in
\cite{KLO} in a dynamic network setting. The paper gives a variant of
the distributed $O(nk)$ token-forwarding algorithm for the case when
the number of nodes $n$ is not known. It is also shown that
$T$-interval connectivity and always $c$-vertex connectivity are
interesting parameters that speed up the solution by factors
of $\Theta(T)$ and $\Theta(c)$, respectively. In addition, \cite{KLO}
gives a first $\Omega(n\log k)$ lower bound for token-forwarding
algorithms in the centralized setting we study in the present
paper. That lower bound is substantially improved in \cite{LBarxiv},
where an almost tight $\Omega(nk/\log n)$ lower bound is proven. As
the lower bound from \cite{LBarxiv} is the basis of our results, we
discuss it in detail in Section \ref{sec:original}.

The fastest known algorithms for token dissemination in dynamic
networks are based on random linear network coding. There, tokens are
understood as elements (or vectors) of a finite field and in every
round, every node broadcasts a random linear combination of the
information it possesses. In a centralized setting, the overhead for
transmitting the coefficients of the linear combination can be
neglected. For this case, it is shown in \cite{AnalyzingNC} that in
always connected dynamic networks, $k$ tokens can be disseminated in
optimal $O(n+k)$ time. If messages are large enough to store $b$
tokens, this bound improves to again optimal $O(n+k/b)$ time. It is
also possible to extend these results to always $c$-connected networks
and to the partial token dissemination problem. Note that one possible
solution for $\delta$-partial token dissemination is to solve regular
token dissemination for only $\delta k$ tokens. If the overhead for
coefficients is not neglected, the best known upper bounds are given
in \cite{HK}. The best bounds for tokens of size $O(\log n)$, as well
as the upper and lower bounds for the other scenarios are listed in
\Cref{table:bounds}. The given bound for always $c$-vertex connected
networks is not proven in \cite{HK}, it can however be obtained with
similar techniques. Note also that instead of $T$-interval
connectivity, \cite{HK} considers a somewhat stronger assumption
called $T$-stability. In a $T$-stable network, the network remains
fixed for intervals of length $T$.

Apart from token dissemination and basic aggregation tasks, other
problems have been considered in the same or similar adversarial
dynamic network models. In \cite{KMO}, the problem of coordinating
actions in time is studied for always connected dynamic networks. In a
recent paper, bounds on what can be achieved if the network is not
always connected are discussed in \cite{cornejo12}. For a model where
nodes know their neighbors before communicating, \cite{avin08} studies
the time to do a random walk if the network can change
adversarially. Further, the problem of gradient clock synchronization
has been studied for an asynchronous variant of the model
\cite{clocksync}. In addition, a number of papers investigate a radio
network variant of essentially the dynamic network model studied here
\cite{anta10,clementi09,dualgraphs}. Another line of
research looks at random dynamic networks that result from some Markov
process, e.g., \cite{baumann09,clementi09b,clementi12}. Mostly these
papers analyze the time required to broadcast a single message in the
network. 
For a more thorough
discussion of related work, we refer to a recent survey
\cite{KOSurvey}.

\section{Model and Problem Definition}
\label{sec:model}

In this section we introduce the dynamic network model and the token
dissemination problem.

\paragraph{Dynamic Networks:}
We follow the dynamic network model of \cite{KLO}: A dynamic network
consists of a fixed set $V$ of $n$ nodes and a dynamic edge set
$E:\mathbb{N}\to 2^{\set{\set{u,v}|u,v\in V}}$. Time is divided into
synchronous rounds so that the network graph of round $r\geq 1$ is
$G(r)=(V,E(r))$. We use the common assumption that round $r$ starts at
time $r-1$ and it ends at time $r$. In each round $r$, every node
$v\in V$ can send a message to all its neighbors in $G(r)$. Note that
we assume that $v$ has to send the same message to all neighbors,
i.e., communication is by local broadcast. Also, we assume that at the
beginning of a round $r$, when the messages are chosen, nodes are not
aware of their neighborhood in $G(r)$. We typically assume that the
message size is bounded by the size of a fixed number of tokens.

We say that a dynamic network $G=(V,E)$ is \emph{always $c$-vertex
    connected} iff $G(r)$ is $c$-vertex connected for every round
$r$. If a network $G$ is always $1$-vertex connected, we also say that
$G$ is \emph{always connected}. Further, we use the definition for
interval connectivity from \cite{KLO}. A dynamic network is
$T$-interval connected for an integer parameter $T\geq 1$ iff the
graph $\big(V,\bigcap_{r'=r}^{r+T-1} E(r')\big)$ is connected for
every $r\geq 1$. Hence, a graph is $T$-interval connected iff there is
a stable connected subgraph for every $T$ consecutive rounds. Note we
do not assume that nodes know the stable subgraph. Also note that a
dynamic graph is $1$-interval connected iff it is always connected.

For our lower bound, we assume randomized algorithms and a
\emph{strongly adaptive adversary} which can decide on the network
$G(r)$ of round $r$ based on the complete history of the network up to
time $r-1$ as well as on the messages the nodes send in round
$r$. Note that the adversary is stronger than the more typical
\emph{adaptive adversary} where the graph $G(r)$ of round $r$ is
independent of the random choices that the nodes make in round
$r$.

\paragraph{The Token Dissemination Problem:}
We prove lower bounds on the following \emph{token dissemination
    problem}. There are $k$ tokens initially distributed among the
nodes in the network (for simplicity, we assume that $k$ is at most
polynomial in $n$). We consider \emph{token-forwarding algorithms} as
defined in \cite{KLO}. In each round, every node is allowed to
broadcast $b\geq 1$ of the tokens it knows to all neighbors. Except
for Section \ref{sec:multipletokens}, we assume that $b=1$. No other
information about tokens can be sent, so that a node $u$ knows exactly
the tokens $u$ kept initially and the tokens that were included in
some message $u$ received. In addition, we also consider the
\emph{$\delta$-partial token dissemination} problem. Again, there are
$k$ tokens that are initially distributed among the nodes in the
network. But here, the requirement is weaker and we only demand that
in the end, every node knows a $\delta$-fraction of the $k$ tokens for
some $\delta\in(0,1]$.

We prove our lower bounds for centralized algorithms where a central
scheduler can determine the messages sent by each node in a round $r$
based on the initial state of all the nodes before round $r$. Note
that lower bounds obtained for such centralized algorithms are
stronger than lower bounds for distributed protocols where the message
broadcast by a node $u$ in round $r$ only depends on the initial
state of $u$ before round $r$.

\section{Lower Bounds}
\label{sec:lowerbounds}

\subsection{General Technique and Basic Lower Bound Proof}
\label{sec:original}

We start our description of the lower bound by outlining the basic
techniques and by giving a slightly polished version of the lower
bound proof by Dutta et al.~\cite{LBarxiv}. For the discussion here,
we assume that in each round, each node is allowed to broadcast a
single token, i.e., $b=1$.

In the following, we make the standard assumption that round $r$ lasts
from time $r-1$ to time $r$. For each node, we maintain two sets of
tokens. For a time $t\geq 0$ and a node $u$, let $K_u(t)$ be the set
of tokens known by node $u$ at time $t$. In addition the adversary
determines a token set $K_u'(t)$ for every node, where
$K_u'(t)\subseteq K_u'(t+1)$ for all $t\geq 0$. The sets $K_u'(t)$ are
constructed such that under the assumption that each node $u$ knows
the tokens $K_u(t)\cup K_u'(t)$ at time $t$, in round $t+1$, overall
the nodes cannot learn many new tokens. Specifically, we define a
potential function $\Phi(t)$ as follows:
\begin{equation}
    \label{eq:potential}
    \Phi(t) := \sum_{u\in V} \left|K_u(t) \cup K_u'(t)\right|.
\end{equation}
Note that for the token dissemination problem to be completed at time
$T$ it is necessary that $\Phi(T) = nk$. Assume that at the beginning,
the nodes know at most $k/2$ tokens on average, i.e., $\sum_{u\in
    V}|K_u(0)|\leq nk/2$. For always connected dynamic graphs, we will
show that there exists a way to choose the $K'$-sets such that
$\sum_{u\in V}|K_u'(0)|<0.3nk$ and that for every choice of the
algorithm, a simple greedy adversary can ensure that the potential
grows by at most $O(\log n)$ per round. We then have $\Phi(0)\leq
0.8nk$ and since the potential needs to grow to $nk$, we get an
$\frac{0.2nk}{O(\log n)}$ lower bound.

In each round $r$, for each node $u$, an algorithm can decide on a
token to send. We denote the token sent by node $u$ in round $r$ by
$i_u(r)$ and we call the collection of pairs $(u,i_u(r))$ for nodes
$u\in V$, the token assignment of round $r$. Note that because a node
can only broadcast a token it knows, $i_u(r)\in K_u(r-1)$ needs to
hold. However, for most of the analysis, we do not make use of
this fact and just consider all the $k$ possible pairs $(u,i_u(r))$
for a node $u$.

If the graph $G(r)$ of round $r$ contains the edge $\set{u,v}$, $u$ or
$v$ learns a new token if $i_v(r)\not\in K_u(r-1)$ or if
$i_u(r)\not\in K_v(r-1)$. Moreover, the edge $\set{u,v}$ contributes
to an increase of the potential function $\Phi$ in round $r$ if
$i_v(r)\not\in K_u(r-1)\cup K_u'(r-1)$ or if $i_u(r)\not\in
K_v(r-1)\cup K_v'(r-1)$. We call an edge $e=\set{u,v}$ \emph{free} in
round $r$ iff the edge does not contribute to the potential difference
$\Phi(r)-\Phi(r-1)$. In particular, this implies that an edge is free
if
\begin{equation}\label{eq:freeedge}
    \big(i_u(r)\in K'_v(r-1) \land i_v(r)\in K'_u(r-1)\big) \lor \big(i_u(r)=i_v(r)\big).
\end{equation}

To construct the $K'$-sets we use the probabilistic method. More
specifically, for every token $i$ and all nodes $u$, we independently
put $i \in K'_u(0)$ with probability $p=1/4$. The following lemma shows
that then only a small number of non-free edges are required in every
graph $G(r)$.

\begin{lemma}[adapted from \cite{LBarxiv}]\label{lemma:freeedges}
    If each set $K'_u(0)$ contains each token $i$ independently with
    probability $p=1/4$, for every round $r$ and every token
    assignment $\set{(u,i_u(r))}$, the graph $F(r)$ induced by all free edges
    in round $r$ has at most $O(\log n)$ components with probability
    at least $3/4$.
\end{lemma}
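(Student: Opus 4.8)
The plan is to reduce the statement to a clean random-graph question about connectivity and then control the number of components via an independent-set bound.

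First I would record two monotonicity reductions that let me work entirely with the random sets $K'_u(0)$. Since the adversary's sets only grow, $K'_u(0)\subseteq K'_u(r-1)$, replacing $K'_u(r-1)$ by $K'_u(0)$ in \eqref{eq:freeedge} can only destroy edges; and since \eqref{eq:freeedge} is merely a sufficient condition for an edge to be free, the graph $F_0$ on vertex set $V$ whose edges are those satisfying \eqref{eq:freeedge} with $K'_u(0)$ is a subgraph of $F(r)$. As adding edges never increases the number of components, it suffices to show that $F_0$, whose edges depend only on the $K'_u(0)$, has $O(\log n)$ components with probability at least $3/4$; this matches the lemma's hypothesis exactly. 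Next I would group the nodes by the token they send: let $S_j=\set{u : i_u(r)=j}$ and let $m\le\min\set{n,k}$ be the number of nonempty groups. The clause $i_u(r)=i_v(r)$ of \eqref{eq:freeedge} makes each $S_j$ a free clique, so every group lies in a single component and the components of $F_0$ are exactly the components of the contracted graph $H$ on the $m$ groups, where groups $a\neq b$ are adjacent iff $\big(\exists\,u\in S_a: b\in K'_u(0)\big)\wedge\big(\exists\,v\in S_b: a\in K'_v(0)\big)$. Since each conjunct fails only if all relevant independent $p=1/4$ indicators vanish, $\Pr[\set{a,b}\in H]=\big(1-(3/4)^{|S_a|}\big)\big(1-(3/4)^{|S_b|}\big)\ge 1/16$ because every group is nonempty.

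The crucial observation, which I expect to be the main obstacle to state cleanly, is that the edge events of $H$ are \emph{mutually independent}. The point is that a single indicator $[\,i\in K'_w(0)\,]$ with $w\in S_j$ is consulted in the definition of $H$ for exactly one pair of groups, namely $\set{i,j}$; hence the events $\set{a,b}\in H$ over distinct pairs are determined by disjoint families of the independent token/node indicators and are therefore independent. In particular $H$ stochastically dominates an Erd\H{o}s--R\'enyi graph $G(m,1/16)$.

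Finally I would bound the number of components of $H$ by an independent-set argument. If $H$ has more than $\ell$ components, then choosing one vertex from each of $\ell+1$ distinct components yields an independent set of size $\ell+1$, so by independence and $m\le n$ we get $\Pr[\#\mathrm{components}>\ell]\le\binom{m}{\ell+1}(15/16)^{\binom{\ell+1}{2}}\le n^{\ell+1}(15/16)^{\binom{\ell+1}{2}}$. Taking $\ell=\Theta(\log n)$ with a sufficiently large constant makes the $(15/16)^{\binom{\ell+1}{2}}$ factor dominate $n^{\ell+1}$ and pushes this probability below $1/4$, giving the claimed $O(\log n)$ bound with probability at least $3/4$. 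The only care needed is tuning the constant in $\ell=\Theta(\log n)$ against $\ln(16/15)$, using $k\le\mathrm{poly}(n)$ so that $\log m=O(\log n)$; this last part is routine.
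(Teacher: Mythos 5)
Your core argument coincides with the paper's: nodes sending the same token are joined by free edges, so an independent set in the free-edge graph uses pairwise distinct tokens; for distinct tokens the relevant freeness events are mutually independent with probability $p^2=1/16$ each; and an independent-set union bound with $\ell=\Theta(\log n)$ finishes. Your contraction to the group graph $H$, the computation $\Pr[\set{a,b}\in H]=(1-(3/4)^{|S_a|})(1-(3/4)^{|S_b|})\ge 1/16$, and the observation that each indicator $[\,i\in K'_w(0)\,]$ is consulted by exactly one pair of groups are all correct, and they package cleanly the independence that the paper asserts more tersely.

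There is, however, a genuine gap in the last step. Your bound $\binom{m}{\ell+1}(15/16)^{\binom{\ell+1}{2}}$ is a probability computed for a \emph{fixed} token assignment: the groups $S_j$, the graph $H$, and the union bound over $\binom{m}{\ell+1}$ choices of components all presuppose that the assignment is fixed before the randomness of the $K'$-sets is revealed. But the lemma is consumed (in Theorem~\ref{thm:original}) via the probabilistic method: the sets $K'_u(0)$ are drawn once, and one needs a single event of probability at least $3/4$ under which \emph{every} token assignment in \emph{every} round yields $O(\log n)$ components, because the algorithm picks its assignment after, and possibly depending on, the realized sets (and there are far too many rounds and $k^n$ possible assignments to union over directly). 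This is exactly why the paper's union bound ranges over the $n^s$ choices of $s$ representative nodes \emph{and} the $k^s$ choices of the tokens they send. Your argument is easily repaired the same way: non-adjacency of $\ell+1$ groups in $H$ implies pairwise non-freeness of any tuple of representatives $u_i\in S_{a_i}$ with their tokens $a_i$, so you may replace $\binom{m}{\ell+1}$ by $(nk)^{\ell+1}$ and quantify over representatives rather than over groups of one fixed assignment. The extra factor $k^{\ell+1}$ is harmless since $k\le\mathrm{poly}(n)$, but without this restructuring the proposal establishes only a per-assignment statement, which is strictly weaker than what the theorem needs.
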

\begin{proof}
    Assume that the graph $F(r)$ has at least $s$ components for some
    $s\geq 1$. $F(r)$ then needs to have an independent set of size
    $s$, i.e., there needs to be a set $S\subseteq V$ of size $|S|\geq
    s$ such that for all $u,v\in S$, the edge $\set{u,v}$ is not free
    in round $r$. Using \eqref{eq:freeedge} and the fact that
    $K_u'(0)\subseteq K_u'(t)$ for all $u$ and $t\geq 0$, an edge
    $\set{u,v}$ is free in round $r$ if $i_u(r)\in K_v'(0)$ and
    $i_v(r)\in K_u'(0)$ or if $i_u(r)=i_v(r)$. 

    To argue that $s$ is always small we use a union bound over all
    $\binom{n}{s} < n^s$ ways to choose a set of $s$ nodes and all at
    most $k^s$ ways to choose the tokens to be sent out by these
    nodes. Note that since two nodes sending out the same token induce
    a free edge, all tokens sent out by nodes in $S$ have to be
    distinct. Furthermore, for any pair of nodes $u,v \in S$ there is
    a probability of exactly $p^2$ for the edge $\set{u,v}$ to be free
    and this probability is independent for any pair $u',v'$ with
    $\set{u',v'}\neq \set{u,v}$ because nodes in $S$ send distinct
    tokens. The probability that all $\binom{s}{2} > s^2/4$ node pairs
    of $S$ are non-free is thus exactly $(1 - p^2)^{\binom{s}{2}} <
    e^{-p^2 s^2/4}$. If $s = 12 p^{-2} \ln nk > 4p^{-2}(\ln nk + 2)$
    (assuming $\ln(nk)>1$), the union bound $(nk)^s e^{-p^2 s^2/4}$ is
    less than $1/4$ as desired. This shows that there is a way to
    choose the sets $K'_u(0)$ such that the greedy adversary always
    chooses a topology in which the graph $F(r)$ induced by all free
    edges has at most 
    $ 2s\leq 24 p^{-2} \ln nk = O(\log n)$
    components.
\hspace*{\fill}\qed\end{proof}

Based on Lemma \ref{lemma:freeedges}, the lower bound from
\cite{LBarxiv} now follows almost immediately.

\begin{theorem}\label{thm:original}
    In an always connected dynamic network with $k$ tokens in
    which nodes initially know at most $k/2$ tokens on average,
    any centralized token-forwarding algorithm takes at least
    $\Omega\big(\frac{nk}{\log n}\big)$ rounds to disseminate all
    tokens to all nodes.
\end{theorem}
 \begin{proof}
     By independently including each token with probability $1/4$ in
     each of the sets $K_u'(0)$, we have that $\sum_u |K'_u| < 0.3 nk$
     with probability at least $3/4$ (for sufficiently large
     $nk$). Further, by Lemma \ref{lemma:freeedges}, with probability
     at least $3/4$, we obtain sets $K_u'(0)$ such that the potential
     can only grow by $O(\log n)$ in every round. Hence, there exists
     set $K'_u(0)$ such that the initial potential is at most $0.8nk$
     and in each round, the potential function does not grow by more
     than $O(\log n)$. As in the end the potential function has to
     reach $nk$, the claim then follows.
 \hspace*{\fill}\qed\end{proof}

\subsection{Partial Token Dissemination}
\label{app:partial}

We conclude our discussion of generalizations of the basic lower bound
proof of Section \ref{sec:original} by showing two relatively simple
results concerning partial token dissemination and a related problem.

\begin{theorem}\label{thm:deltaLB}
    For any $\delta > 0$, suppose an always connected dynamic network
    with $k$ tokens in which nodes initially know at most $\delta k/2$
    tokens on average.  Then, any centralized token-forwarding
    algorithm requires at least $\Omega(\frac{nk\delta^3}{\log(nk)})$
    rounds to solve $\delta$-partial token dissemination.
\end{theorem}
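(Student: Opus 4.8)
The plan is to reuse the potential-function framework and the free-edge lemma from \Cref{sec:original} essentially verbatim, but to scale the inclusion probability $p$ of the $K'$-sets down to $\Theta(\delta)$. The intuition is that when only a $\delta$-fraction of the tokens must reach each node, the ``head start'' we grant the nodes through the $K'$-sets should be proportionally smaller, so that these sets do not by themselves already push the potential above the now much lower target value $n\delta k$.

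Concretely, I would set $p = \delta/8$ and, exactly as in \Cref{sec:original}, place each token $i$ into each set $K_u'(0)$ independently with probability $p$, keeping $K_u'(t)=K_u'(0)$ fixed over time. First I would re-run the counting argument of \Cref{lemma:freeedges} with this general $p$: that proof carries an arbitrary $p$ all the way to its last line, so the same union bound shows that, with probability at least $3/4$, for every round and every token assignment the free-edge graph $F(r)$ has at most $O(p^{-2}\log(nk)) = O(\delta^{-2}\log(nk))$ components. A strongly adaptive adversary can then connect these components with a spanning tree of non-free edges; since each non-free edge lets its two endpoints learn at most one new, potential-increasing token each, the total potential $\Phi$ grows by at most $O(\delta^{-2}\log(nk))$ per round.

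Next I would track the potential itself. By a Chernoff bound, $\sum_u |K_u'(0)| \le 2pnk = \delta nk/4$ with high probability (this requires $\delta nk = \Omega(\log(nk))$; otherwise the claimed bound is $O(1)$ and holds trivially), so together with the hypothesis $\sum_u|K_u(0)| \le n\delta k/2$ we obtain an initial potential $\Phi(0) \le \tfrac34 n\delta k$. On the other hand, once $\delta$-partial dissemination is solved each node knows at least $\delta k$ tokens, hence $\Phi(T) \ge \sum_u |K_u(T)| \ge n\delta k$. A union bound over the two failure events (the size bound on the $K'$-sets and the component bound of the free-edge lemma) shows that a single choice of the $K'$-sets realizes both, so a valid hard instance exists. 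Dividing the potential gap $\Phi(T)-\Phi(0) \ge \tfrac14 n\delta k$ by the per-round increase $O(\delta^{-2}\log(nk))$ then yields the claimed $\Omega\big(n\delta^3 k/\log(nk)\big)$ lower bound.

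The delicate point is the tuning of $p$. It must be small enough that the contribution $pnk$ of the $K'$-sets leaves a gap that is still a constant fraction of the target $n\delta k$, which forces $p = O(\delta)$; yet this very choice inflates the per-round potential growth from $O(\log(nk))$ to $O(\delta^{-2}\log(nk))$, because the free-edge lemma degrades like $p^{-2}$. The three factors of $\delta$ lost here --- one from the shrunken potential gap of $\Theta(\delta nk)$ and two from the $p^{-2}$ degradation of the per-round bound --- are precisely what turn the basic $\Omega(nk/\log(nk))$ estimate into $\Omega(nk\delta^3/\log(nk))$. Getting these constants to combine cleanly, rather than leaking an extra $\delta$ factor, is the only real care the argument demands.
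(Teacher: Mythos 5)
Your proposal is correct and follows essentially the same route as the paper's own proof: the paper likewise sets $p=\Theta(\delta)$ (it uses $\delta/4$), reuses the union-bound argument of \Cref{lemma:freeedges} to get a per-round potential increase of $24p^{-2}\ln(nk)=O(\delta^{-2}\log(nk))$, bounds $\Phi(0)$ by a constant fraction of the target $\delta nk$, and divides the gap by the per-round growth. Your write-up just makes the Chernoff bound on $\sum_u|K_u'(0)|$ and the union bound over the two failure events slightly more explicit.
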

\begin{proof}
    The proof is analogous to the proof in Section
    \ref{sec:original}. Again, we construct the $K'$-sets using the
    probabilistic method. Here, we include every token in every set
    $K'_u(0)$ with probability $p=\delta/4$.  For sufficiently large
    $n$, we then get that $\Phi(0) < 0.8 \delta k n$ with probability
    at least $3/4$. A potential of at least $\Phi(T) \geq \delta nk$
    is needed to terminate at time $T$.  Following the same proof as
    for \Cref{lemma:freeedges}, there exists $K'$-sets such that in
    each round the potential increases by at most $24 p^{-2} \ln nk =
    O(\delta^{-2} \log nk)$ which implies a $\frac{\delta
        nk}{O(\delta^{-2} \log
        nk)}=\Omega\big(\frac{nk\delta^3}{\log(nk)}\big)$ lower bound.
\hspace*{\fill}\qed\end{proof}

\subsubsection*{Token Dissemination Based on Forward Error Correction}

Let us now consider an interesting special case where initially one
node knows all the tokens. In this situation, a simple way of applying
coding for token dissemination is to use forward error correcting
codes (FEC). From the $k$ tokens, such a code generates a large number
of code words (of essentially the same length as one of the message),
so that getting any $k$ code words allows to reconstruct all the $k$
messages.

\begin{theorem}
    Any token dissemination algorithm as described above takes at
    least $\Omega(n + k (\frac{n}{\log (nk)})^{1/3})$ rounds to
    disseminate $k$ tokens.
\end{theorem}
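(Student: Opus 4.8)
The plan is to reduce this FEC problem to the $\delta$-partial dissemination lower bound of \Cref{thm:deltaLB} by treating each coded packet as a ``token.'' The crucial observation is that only the single source can \emph{generate} coded packets; every other node can merely forward packets it has already received. Since the source broadcasts at most one packet per round, over the course of any $T$-round run at most $N \le T$ distinct coded packets ever enter the network. Because any $k$ packets suffice to decode and every node must eventually decode, the task is precisely to deliver $k$ of these $N$ packets to each node, i.e.\ $\delta$-partial token dissemination with $N$ tokens and $\delta = k/N$ (here $\delta \le 1$, since $N \ge k$ is necessary for any node to decode).

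Next I would verify the hypothesis of \Cref{thm:deltaLB}, namely that the average initial token knowledge is at most $\delta N/2 = k/2$. This is the one delicate point about the initial configuration, since the source alone already knows all $N$ packets. However, only the source has nonzero initial knowledge, so the average is exactly $N/n$. I may assume $T \le nk/2$, as otherwise $T$ already dominates the claimed bound $\Omega\big(n + k(n/\log(nk))^{1/3}\big)$; hence $N \le T \le nk/2$ and the average $N/n \le k/2$, as required.

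With the hypothesis in place, applying \Cref{thm:deltaLB} with $N$ tokens and $\delta = k/N$ yields a running-time lower bound of
\[
\Omega\!\left(\frac{nN\delta^3}{\log(nN)}\right) = \Omega\!\left(\frac{nk^3}{N^2\log(nN)}\right).
\]
Substituting $N \le T$ and absorbing $\log(nN) = O(\log(nk))$ (valid since $T \le nk/2$ and $k$ is polynomial in $n$) gives $T \ge c\,nk^3/(T^2\log(nk))$ for some constant $c>0$, i.e.\ $T^3\log(nk) = \Omega(nk^3)$, which rearranges to $T = \Omega\big(k(n/\log(nk))^{1/3}\big)$. Finally I would add the routine $\Omega(n)$ term: even a single packet can be forced to take $n-1$ rounds to reach the farthest node in an adversarial always-connected network (the adversary reveals one new informed node per round), and every node must receive at least one packet. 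Taking the maximum of the two bounds gives the claimed $\Omega\big(n + k(n/\log(nk))^{1/3}\big)$.

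I expect the main obstacle to be making the reduction fully rigorous, because the set of $N$ codewords governing the reduced instance is determined \emph{adaptively} during the run, whereas \Cref{thm:deltaLB} fixes its token universe in advance and builds the random $K'$-sets over it. To handle this I would pre-commit to a universe of $N_0 \le T$ codeword ``slots,'' assign the probabilistic $K'$-sets to these slots at time $0$, and then relabel the $i$-th distinct packet the source broadcasts as slot $i$. Since the packets are symmetric (any $k$ decode) and the algorithm is oblivious to the adversary's internal $K'$-bookkeeping, this online relabeling preserves both the greedy adversary's strategy and the union-bound argument of \Cref{lemma:freeedges}; verifying that these ingredients survive the online assignment, together with the self-referential inversion of $N \le T$, is where the care is needed.
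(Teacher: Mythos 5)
Your proposal is correct and follows essentially the same route as the paper's own proof: bound the number of distinct coded packets by $T$, apply \Cref{thm:deltaLB} with $\delta = k/T$ to get $T = \Omega\big(nT(k/T)^3/\log(nk)\big)$, solve for $T$, and add the trivial $\Omega(n)$ diameter bound. You are in fact somewhat more careful than the paper, which silently skips both the verification of the initial-knowledge hypothesis of \Cref{thm:deltaLB} and the issue that the codeword universe is determined adaptively during the run.
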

\begin{proof}
    Let $T$ be the time in which the FEC-based token dissemination
    algorithm terminates. The total number of different FEC messages
    sent is at most $T$ and every node needs to receiver at least a
    $\delta = k / T$ fraction of these messages. From
    \Cref{thm:deltaLB} we thus get that $T = \Omega(\frac{nT
        (k/T)^3}{\log (nk)})$ which leads to $T^3 > \Omega(k^3
    \frac{n}{\log n})$. As the network can be a static network of
    diameter linear in $n$, $\Omega(n)$ is clearly also a lower bound
    on the time needed. Together, the two bounds imply the claim of
    the theorem.
\hspace*{\fill}\qed\end{proof}

\subsection{Sending Multiple Tokens per Round}
\label{sec:multipletokens}

In this section we show that it is possible to extend the lower bound
to the case where nodes can send out $b>1$ tokens in each round. Note
that it is a priori not clear that this can be done as for instance
the related $\Omega(n\log k)$ lower bound of \cite{KLO} breaks down
completely if nodes are allowed to send two instead of one tokens in
each round.

In order to prove a lower bound for $b>1$, we generalize the notion of
free edges. Let us first consider a token assignment for the case
$b>1$. Instead of sending a single token $i_u(r)$, each node $u$ now
broadcasts a set $I_u(r)$ of at most $b$ tokens in every round
$r$. Analogously to before, we call the collection of pairs
$\big(u,I_u(r)\big)$ for $u\in V$, the token assignment of round
$r$. We define the weight of an edge in round $r$ as the amount the
edge contributes to the potential function growth in round $r$. Hence,
the weight $w(e)$ of an edge $e=\set{u,v}$ is defined as
\begin{equation}
    \label{eq:edgeweight}
    w(e) := \left|I_v(r)\setminus (K_u(r\!-\!1)\cup K_u'(r\!-\!1))\right| + 
    \left|I_u(r)\setminus (K_v(r\!-\!1)\cup K_v'(r\!-\!1))\right|.
\end{equation}
As before, we call an edge $e$ with weight $w(e)=0$ free. Given the
edge weights and the potential function as in Section
\ref{sec:original}, a simple possible strategy of the adversary works
as follows. In each round, the adversary connects the nodes using an
MST w.r.t.\ the weights $w(e)$ for all $e\in {V\choose 2}$. The total
increase of the potential function is then upper bounded by the weight
of the MST.

For the MST to contain $\ell$ or more edges of weight at least $w$,
there needs to be set $S$ of $\ell+1$ nodes such that the weight of
every edge $\set{u,v}$ for $u,v\in S$ is at least $w$. The following
lemma bounds the probability for this to happen, assuming that the
$K'$-sets are chosen randomly such that every token $i$ is contained
in every set $K'_u(0)$ with probability $p=1-\eps/(4eb)$ for some
constant $\eps>0$. 

\begin{lemma}\label{lemma:cliques}
    Assume that each set $K_u'(0)$ contains each token independently
    with probability $1-\eps/(4e b)$. Then, for every token assignment
    $(u,I_u(r))$, there exists a set $S$ of size $\ell+1$ such that
    all edges connecting nodes in $S$ have weight at least $w$ with
    probability at most
    \[
    \exp\left((\ell+1)\cdot\left(
            \ln n + b\ln k + \ell+1 - 
            \frac{\ell w}{12}\ln\left(\frac{w}{\eps}\right)
        \right)\right).
    \]
\end{lemma}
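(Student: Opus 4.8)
The plan is to prove the bound by a union bound over the set $S$ and over the tokens broadcast inside $S$, followed by a careful estimate, for a fixed $S$ and assignment, of the probability that all $\binom{\ell+1}{2}$ inner edges are heavy. The clean randomness is the family of independent indicators $A_{v,i}:=[\,i\notin K'_v(0)\,]$, each equal to $1$ with probability $q:=\eps/(4eb)$; write $B_v:=\set{i:A_{v,i}=1}$ for the ``missing set'' of $v$ and $T_v:=\set{i:i\notin K_v(r-1)}$. Since $K'_v(r-1)\supseteq K'_v(0)$, the directed contribution obeys $|I_u\setminus(K_v(r-1)\cup K'_v(r-1))|\le|I_u\cap T_v\cap B_v|$, so an edge of weight $\ge w$ forces $|I_u\cap T_v\cap B_v|+|I_v\cap T_u\cap B_u|\ge w$. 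The decisive structural fact, which I would stress up front, is that a node only broadcasts tokens it knows, i.e. $I_u\subseteq K_u(r-1)$; hence a token sent by \emph{both} endpoints of an edge is known to both and contributes nothing to that edge. In particular the degenerate assignment in which all nodes broadcast the same block of tokens yields only weight-zero edges. This is exactly what excludes the linear-decay behaviour and makes a quadratic bound possible.

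First I would union-bound over the $\le n^{\ell+1}$ choices of $S$ and the $\le k^{b(\ell+1)}$ choices of the tokens broadcast by the $\ell+1$ nodes of $S$, producing the factor $\exp\!\big((\ell+1)(\ln n+b\ln k)\big)$. It then remains to show that, for fixed $S$ and fixed assignment, the heavy-clique probability is at most $\exp\!\big((\ell+1)(\ell+1-\tfrac{\ell w}{12}\ln(w/\eps))\big)$. Setting $\mathcal{I}_v:=T_v\cap\bigcup_u I_u$ (at most $(\ell+1)b$ relevant tokens per node) and $C_v:=B_v\cap\mathcal{I}_v$, the heavy-clique event is contained in the monotone event $\mathcal{E}=\set{\forall\set{u,v}:\ |I_u\cap C_v|+|I_v\cap C_u|\ge w}$, which depends only on the $C_v$. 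Therefore $\Pr[\mathcal{E}]\le\sum_{\set{C_v}\text{ valid}}q^{\sum_v|C_v|}$, the sum running over configurations satisfying the coverage requirement.

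The heart of the argument is a lower bound on $N:=\sum_v|C_v|$ forced by coverage. Counting incidences gives $\sum_i m_i\gamma_i\ge\binom{\ell+1}{2}w$, where $m_i=|\set{u\in S:i\in I_u}|$ is the multiplicity of token $i$ and $\gamma_i=|\set{v:i\in C_v}|$; the two governing constraints are the per-node budget $\sum_i m_i\le(\ell+1)b$ and the tradeoff $\gamma_i\le\ell+1-m_i$ (a token known to its $m_i$ senders can be missing only at the remaining nodes). Optimising $N=\sum_i\gamma_i$ against these shows that high-multiplicity tokens, cheap per incidence, are precisely those with few admissible receivers, so $N$ cannot be small; at the extreme $w=2b$ every token must be distinct and $N\ge\binom{\ell+1}{2}w$ (this is also the regime where all edge events are genuinely independent, giving the clean quadratic bound directly). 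Feeding the resulting lower bound into $\sum_{N}\binom{M}{N}q^{N}$ with $M:=(\ell+1)^2b$, and using $\binom{M}{N}q^N\le(eMq/N)^N$ with $Mq=\tfrac{\eps(\ell+1)^2}{4e}\le(\ell+1)^2$, yields the stated form: the term $(\ell+1)^2$ is the slack from the configuration count (the maximum $e^{Mq}$ of the binomial sum), while for $N\ge N_{\min}$ the estimate drops to $\exp\!\big(-\Theta((\ell+1)\ell w\ln(w/\eps))\big)$, the constant $\tfrac1{12}$ absorbing the factor-$2$ losses and the gap between $\ln(2w/\eps)$ and $\ln(w/\eps)$.

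The main obstacle is exactly token multiplicity. Because $A_{v,i}$ is shared by all edges from $v$ to senders of $i$, the $\binom{\ell+1}{2}$ edge events are not independent, and the two naive routes both fail: a per-edge product bound is invalid (the events are positively correlated through a common receiver), and a Chernoff bound on the \emph{total} weight is dominated by moderate-multiplicity tokens and yields only decay linear in $\ell$. Likewise, any purely per-receiver estimate collapses, since overlap among the in-neighbours' token sets can inflate a single receiver's factor to $q^{w/2}$. The quadratic decay is recovered only through the \emph{global} interplay of the budget $b$ and the multiplicity/receiver tradeoff, which prevents a handful of heavily shared tokens from covering all edges cheaply. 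Making this optimisation quantitatively match $\tfrac1{12}$ uniformly over $0<w\le 2b$ — with the $(\ell+1)^2$ slack taking over in the small-$w$ range where the true decay is only linear — is the one genuinely delicate computation, and it is where I expect the bulk of the work to lie.
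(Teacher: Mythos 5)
Your outer union bound (over the $\le n^{\ell+1}$ node sets and $\le k^{b(\ell+1)}$ token-set choices) matches the paper's, and your insistence that token multiplicity is the crux is well placed; but the heart of your plan --- the claim that the budget $\sum_i m_i\le(\ell+1)b$ together with the tradeoff $\gamma_i\le \ell+1-m_i$ forces $N=\sum_v|C_v|$ to be large enough that $q^N$ decays like $\exp\bigl(-\Theta((\ell+1)\,\ell w\ln(w/\eps))\bigr)$ --- is asserted, not proved, and as stated it is false in the intermediate range $1\ll w<2b$. Concretely, partition the $\ell+1$ nodes into $\sqrt{\ell+1}$ groups of $\sqrt{\ell+1}$ nodes; let every member of group $A$ broadcast a common block $T^A$ of $w/2$ tokens (blocks pairwise disjoint) together with $w/2$ private tokens (feasible for $b\ge w$), and let each node miss exactly the blocks of all \emph{other} groups and the privates of its own group-mates. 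Every cross-group pair then has weight $w/2+w/2=w$ (each endpoint misses the other's block; the shared block contributes nothing, consistent with $I_u\subseteq K_u(r-1)$), every within-group pair gets weight $w$ from the privates, and yet each node misses only $\Theta(\sqrt{\ell}\,w)$ relevant tokens, so $N=\Theta(\ell^{3/2}w)$ --- all of your listed constraints, including the aggregate incidence inequality $\sum_i m_i\gamma_i\ge\binom{\ell+1}{2}w$, are satisfied. Your aggregate inequality does not see the per-pair coverage structure, and even the per-pair structure only costs a $\sqrt{\ell}$ factor here, not the $\ell$ factor you need; for $w\lesssim b/\log\ell$ a hierarchical version of this grouping pushes $N$ down to $O(\ell w\log\ell)$. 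Your argument is airtight only at the endpoint $w=2b$, exactly as you note, so the ``delicate computation'' you defer cannot in fact close uniformly over $0<w\le 2b$ along the route you describe.

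For comparison, the paper's proof takes a different and much shorter route: for a fixed set $v_0,\dots,v_\ell$ and fixed token sets it defines per-node events $\mathcal{E}_i=\set{\bigl|\bigcup_{j\ne i}T_j\setminus K'_{v_i}(0)\bigr|>\ell w/4}$, bounds $\Pr[\mathcal{E}_i]\le\binom{\ell b}{\ell w/4}\bigl(\tfrac{\eps}{4eb}\bigr)^{\ell w/4}\le(\eps/w)^{\ell w/4}$, asserts that a heavy clique forces at least $(\ell+1)/3$ of these events to occur, exploits that the $\mathcal{E}_i$ are independent across $i$ (each depends only on $K'_{v_i}(0)$), and finishes with the binomial tail $\binom{\ell+1}{(\ell+1)/3}(\eps/w)^{\frac{\ell w}{4}\cdot\frac{\ell+1}{3}}$ --- this is where the constant $\tfrac1{12}$ and the $\ell+1$ slack in the exponent come from. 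You should be aware, however, that the pivotal implication (heavy clique $\Rightarrow$ at least $(\ell+1)/3$ events) is stated in the paper without proof, and the natural tournament/in-degree justification for it works only when the broadcast tokens are essentially distinct across $S$, as in the $b=1$ case of Lemma~\ref{lemma:freeedges}; the grouping assignment above yields a heavy clique in which \emph{no} event $\mathcal{E}_i$ fires once $\ell\ge 17$, since each union of missing tokens has size $\Theta(\sqrt{\ell}\,w)\le\ell w/4$. So your distrust of per-edge and per-node independence arguments is vindicated --- your configuration-counting framework is precisely where the difficulty becomes visible --- but your proposal reduces the lemma to an optimization whose true optimum is $\Theta(\ell^{3/2}w)$ rather than $\Theta(\ell^2 w)$, which is too weak to recover the stated bound, so a genuine gap remains.
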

\begin{proof}
    Consider an arbitrary (but fixed) set of nodes $v_0,\dots,v_\ell$
    and a set of token sets $T_0,\dots,T_\ell$ (we assume that the
    token assignment contains the $\ell+1$ pairs $(v_i,T_i)$). We
    define $\mathcal{E}_i$ to be the event that $\big|\bigcup_{j\neq
        i} T_j\setminus K_{v_i}'(0)\big|> \ell w /4$. Note that
    whenever $|K_{v_i}\cup K_{v_i}'|$ grows by more than $\ell w/4$,
    the event $\mathcal{E}_i$ definitely happens. In order to have
    $|T_j\setminus K_{v_i}'(0)|+|T_i\setminus K_{v_j}'(0)|\geq w$ for
    each $i\neq j$, at least $(\ell+1)/3$ of the events
    $\mathcal{E}_i$ need to occur. Hence, for all edges
    $\set{v_i,v_j}$, $i,j\in\set{0,\dots,\ell}$, to have weight at
    least $w$, at least $(\ell+1)/3$ of the events $\mathcal{E}_i$
    have to happen. As the event $\mathcal{E}_i$ only depends on the
    randomness used to determine $K_{v_i}'(0)$, events $\mathcal{E}_i$
    for different $i$ are independent. The number of events
    $\mathcal{E}_i$ that occur is thus dominated by a binomial random
    variable $\mathrm{Bin}\big(\ell+1,\max_i\Pr[\mathcal{E}_i]\big)$
    variable with parameters $\ell+1$ and
    $\max_i\Pr[\mathcal{E}_i]$. The probability $\Pr[\mathcal{E}_i]$
    for each $i$ can be bounded as follows:
    \[
    \Pr[\mathcal{E}_i] \leq {\ell b\choose \ell w/4} \cdot
    \left(\frac{\eps}{4e b}\right)^{\ell w/4} \leq
    \left(\frac{4e\ell b}{\ell w}\right)^{\ell
        w/4}\cdot\left(\frac{\eps}{4e b}\right)^{\ell w /4} =
    \left(\frac{\eps}{w}\right)^{\ell w /4}.
    \]
    Let $X$ be the number of events $\mathcal{E}_i$ that occur. We
    have
    \[
    \Pr\left[X\geq \frac{\ell+1}{3}\right] \leq 
    {\ell+1\choose (\ell+1)/3}\cdot
    \left(\frac{\eps}{w}\right)^{\frac{\ell w}{4}\cdot \frac{\ell+1}{3}}
    \leq
    2^{\ell+1}\cdot
    \left(\frac{\eps}{w}\right)^{\frac{\ell w}{4}\cdot \frac{\ell+1}{3}}.
    \]
    The number of possible ways to choose $\ell+1$ nodes and assign a
    set of $b$ tokens to each node is
    \[
    {n\choose \ell+1}\cdot {k\choose b}^{\ell+1} \leq 
    \left(n k^b\right)^{\ell+1}.
    \]
    The claim of the lemma now follows by applying a union bound over
    all possible choices $v_0,\dots,v_\ell$ and $T_0,\dots,T_\ell$.
\hspace*{\fill}\qed\end{proof}

Based on Lemma \ref{lemma:cliques}, we obtain the following theorem.

\begin{theorem}\label{thm:multipletoken}
    On always connected dynamic networks with $k$ tokens in which
    nodes initially know at most $k/2$ tokens on average, every
    centralized randomized token-forwarding algorithm requires at least
    \[
    \Omega\left( \frac{nk}{(\log n + b\log k)b\log\log b} \right) \geq 
    \Omega\left( \frac{nk}{b^2 \log n \log \log n} \right)
    \]
    rounds to disseminate all tokens to all nodes.
\end{theorem}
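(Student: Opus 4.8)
The plan is to run the same potential argument as in \Cref{sec:original}, but with the $K'$-sets made \emph{dense} rather than sparse: I would fix a sufficiently small constant $\eps>0$ and place each token into each $K'_u(0)$ independently with probability $p=1-\eps/(4eb)$, exactly as \Cref{lemma:cliques} assumes, so that heavy edges become rare. The adversary's strategy in every round $r$ is to let $G(r)$ be a minimum spanning tree with respect to the edge weights $w(e)$ of \eqref{eq:edgeweight}; since each edge $e$ contributes at most $w(e)$ to the growth of $\Phi$, the increase of the potential in round $r$ is at most the total MST weight $W_{\mathrm{MST}}(r)$. Two quantities then control the bound: the initial gap $nk-\Phi(0)$, which lower-bounds the total potential the algorithm still has to generate, and $\max_r W_{\mathrm{MST}}(r)$, the per-round progress. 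The target is to show $nk-\Phi(0)=\Omega(nk/b)$ and $W_{\mathrm{MST}}(r)=O\big((\log n+b\log k)\log\log b\big)$ simultaneously for all rounds, since dividing the two yields the claimed bound.

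For the initial gap, note that $nk-\Phi(0)=\sum_u\big|\overline{K_u(0)\cup K'_u(0)}\big|$ is a sum of independent indicators, one per pair $(u,i)$ with $i\notin K_u(0)$, each equal to $1$ with probability $1-p=\eps/(4eb)$. Using $\sum_u|K_u(0)|\leq nk/2$, its expectation is at least $\frac{\eps}{4eb}\cdot\frac{nk}{2}=\Theta(nk/b)=\Omega(n)$, so a Chernoff bound makes $nk-\Phi(0)=\Omega(nk/b)$ with high probability. This is the routine part.

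The crux is bounding $W_{\mathrm{MST}}(r)$. Since every edge weight is at most $2b$, I would write $W_{\mathrm{MST}}(r)=\sum_{w=1}^{2b}N_w$, where $N_w$ is the number of MST edges of weight at least $w$. The MST observation stated just before \Cref{lemma:cliques} is that $N_w\geq\ell$ forces the graph of all edges of weight $<w$ to have at least $\ell+1$ components, hence yields a set $S$ of $\ell+1$ nodes whose pairwise edges all have weight at least $w$; \Cref{lemma:cliques} bounds the probability of such an $S$. For each level $w$ I would take the smallest threshold $\ell_w$ making $\tfrac{\ell_w w}{12}\ln(w/\eps)\geq 2\big(\ln n+b\ln k+\ell_w+1\big)$, which pushes the exponent in \Cref{lemma:cliques} below $-(\ell_w+1)(\ln n+b\ln k)$ and hence the failure probability below $n^{-2}k^{-2b}$; a union bound over the $2b$ levels then fails only with probability $o(1)$. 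Solving the threshold gives $\ell_w=O\big((\log n+b\log k)/(w\ln(w/\eps))\big)$, so $W_{\mathrm{MST}}(r)<\sum_w\ell_w=O\big(\log n+b\log k\big)\cdot\sum_{w=1}^{2b}\tfrac{1}{w\ln(w/\eps)}$. The inner sum is the source of the $\log\log$ factor: comparing it to $\int \mathrm{d}w/(w\ln(w/\eps))=\ln\ln(w/\eps)$ gives $\sum_{w=1}^{2b}\tfrac{1}{w\ln(w/\eps)}=O(\log\log b)$, so $W_{\mathrm{MST}}(r)=O\big((\log n+b\log k)\log\log b\big)$.

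I expect the main obstacles to be twofold. First, the threshold argument only helps if \emph{every} level $w\geq1$ already yields a nontrivial cap on $N_w$, which needs $\tfrac{w}{12}\ln(w/\eps)$ to exceed $2$ even at $w=1$; this is exactly why $\eps$ must be a sufficiently small constant (so that $\ln(1/\eps)$ is large) rather than taken close to $1$. Second, the weights $w(e)$ depend on the round through $K_u(r-1)$ and $K'_u(r-1)$, whereas \Cref{lemma:cliques} is phrased via the fixed $K'_u(0)$; I would resolve this by observing that $K_u(r-1)\cup K'_u(r-1)\supseteq K'_u(0)$, so the round-$r$ weight of every edge is dominated by the weight computed from $K'_u(0)$ alone, and hence the round-$r$ MST weight is dominated by the MST weight of these fixed initial weights. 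As the union bound inside \Cref{lemma:cliques} already ranges over all token assignments, the resulting cap on each $N_w$ then holds in every round at once, and the probabilistic method yields one fixed choice of $K'$-sets for which both good events hold. Combining $nk-\Phi(0)=\Omega(nk/b)$ with $W_{\mathrm{MST}}=O((\log n+b\log k)\log\log b)$ gives $\Omega\big(\tfrac{nk}{(\log n+b\log k)b\log\log b}\big)$ rounds, and the second inequality in the theorem follows from $\log k=O(\log n)$ and $\log\log b=O(\log\log n)$ for $k=\mathrm{poly}(n)$.
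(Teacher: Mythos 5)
Your proposal is correct and follows essentially the same route as the paper: dense $K'$-sets with $p=1-\eps/(4eb)$, an MST adversary, a layered bound on the MST weight via \Cref{lemma:cliques}, the harmonic-type sum producing the $\log\log b$ factor, and a Chernoff bound giving an initial potential gap of $\Theta(nk/b)$. The only cosmetic difference is that you decompose the MST weight into unit levels $N_w$ for $w=1,\dots,2b$ whereas the paper uses dyadic levels $w_i=2^i$; both give the same $O\big((\log n+b\log k)\log\log b\big)$ per-round bound (and your explicit remarks on why $\eps$ must be small at $w=1$ and on the monotonicity $K'_u(0)\subseteq K_u(r-1)\cup K'_u(r-1)$ are points the paper leaves implicit).
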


\begin{proof}
    For $w_i=2^i$, let $\ell_i+1$ be the size of the largest set
    $S_i$, such that that edge between any two nodes $u,v\in S_i$ has
    weight at least $w_i$. Hence, in the MST, there are at most
    $\ell_i$ edges with weight between $w_i$ and $2w_i$. The amount by
    which the potential function $\Phi$ increases in round $r$ can
    then be upper bounded by
    \[
    \sum_{i=0}^{\log b} 2w_i\cdot\ell_i = \sum_{i=0}^{\log b}2^{i+1}\cdot\ell_i.
    \]
    By Lemma \ref{lemma:cliques} (and a union bound over the $\log b$
    different $w_i$), for a sufficiently small constant $\eps>0$,
    \[
    \ell_i = O\left(\frac{\log n + b\log k}{w_i\log w_i}\right) =
    O\left(\frac{\log n + b\log k}{2^i\cdot i}\right)
    \]
    with high probability. The number of tokens learned in each round
    can thus be bounded by
    \[
    \sum_{i=0}^{\log b}O\left(\frac{\log n + b\log k}{i}\right) = 
    O\big((\log n + b\log k)\log\log b\big).
    \]
    By a standard Chernoff bound, with high probability, the initial
    potential is of the order $1-\Theta(nk/b)$. Therefore to
    disseminate all tokens to all nodes, the potential has to increase
    by $\Theta(nk/b)$ and the claim follows.
\hspace*{\fill}\qed\end{proof}

\subsection{Interval Connected Dynamic Networks}
\label{sec:interval}

While allowing that the network can change arbitrarily from round to
round is a clean and useful theoretical model, from a practical point
of view it might make sense to look at dynamic graphs that are a bit
more stable. In particular, some connections and paths might remain
reliable over some period of time. In \cite{KLO}, token dissemination
and the other problems considered are studied in the context of
$T$-interval connected graphs. For $T$ large enough, sufficiently many
paths remain stable for $T$ rounds so that it is possible to use
pipelining along the stable paths to disseminate tokens significantly
faster (note that this is possible even though the nodes do not know
which edges are stable). In the following, we show that the lower
bound described in Section \ref{sec:original} can also be extended to
$T$-interval connected networks.

\begin{theorem}
    On $T$-interval connected dynamic networks in which nodes
    initially know at most $k/2$ of $k$ tokens on average, every
    randomized token-forwarding algorithm requires at least
    \[
    \Omega\left( \frac{nk}{T(T\log k + \log n)} \right) \geq 
    \Omega\left( \frac{nk}{T^2 \log n} \right) 
    \]
    rounds to disseminate all tokens to all nodes.
\end{theorem}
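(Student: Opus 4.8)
The plan is to reuse the potential-function framework of \Cref{thm:original}, adjusting the token-inclusion probability and the adversary to the window structure forced by $T$-interval connectivity. I keep the sets $K_u(t),K_u'(t)$ and the potential $\Phi(t)$ of \eqref{eq:potential}, and I build each $K_u'(0)$ by including every token independently with probability $p=1-\Theta(1/T)$. Since a token that $u$ does not already hold is absent from $K_u'(0)$ only with probability $1-p=\Theta(1/T)$, a Chernoff bound gives $\Phi(0)\le nk-\Omega(nk/T)$ with high probability, so the potential must still grow by $\Omega(nk/T)$ before dissemination can finish. It therefore suffices to design a $T$-interval connected adversary under which $\Phi$ grows by only $O(\log n+T\log k)$ per round on average, because $\frac{nk/T}{\log n+T\log k}=\Omega\big(\frac{nk}{T(T\log k+\log n)}\big)$, which is the stated bound and simplifies to $\Omega\big(\frac{nk}{T^2\log n}\big)$ once we use $\log k=O(\log n)$.

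I would run the adversary in windows of $T$ rounds, committing at the start of each window to a stable connected spanning subgraph $H$ that persists for all $T$ rounds. The right notion now is a \emph{window-free} edge: a stable edge across which, over the whole window, no endpoint ever receives a token outside its $K'(0)$-set. The key probabilistic point is that window-freeness still has constant probability: over $T$ rounds a stable edge carries at most $T$ tokens in each direction, and all of them lie in the relevant $K'(0)$-sets with probability at least $p^{2T}=(1-\Theta(1/T))^{2T}=\Theta(1)$. Consequently the analog of \Cref{lemma:freeedges} goes through with the same structure, the only change being that the union bound must range not over the single token each node sends but over the length-$T$ sequence of tokens it forwards during the window; this replaces the $k^s$ factor of \Cref{lemma:freeedges} by $(k^T)^s$ and hence bounds the number of components of the window-free graph by $s=O(\log(nk^T))=O(\log n+T\log k)$.

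It then remains to connect these $s$ window-free clusters into the single stable subgraph $H$ and to charge the potential this costs over the window. The clusters are joined by $s-1$ stable tree edges, and within a cluster all transmissions of native tokens are free by construction, so the only charge comes from tokens that cross a cluster boundary and then pipeline onward. My target is to show that each boundary contributes only $O(T)$ to the potential over the window, giving a per-window charge of $O\big(T(\log n+T\log k)\big)$, i.e. $O(\log n+T\log k)$ per round, exactly as required.

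I expect the propagation of these imported tokens to be the main obstacle. Once a token is carried across a connecting edge into a cluster at some round of the window, it has up to $T$ remaining rounds to pipeline through that cluster, and it need not lie in the $K'(0)$-sets of the cluster's interior nodes, so a single boundary crossing could in principle become new at many nodes rather than $O(T)$ of them. Taming this will require either keeping the clusters shallow relative to $T$, or folding the $T$-hop reachability directly into the combinatorial lemma (enlarging the ``tokens a node can acquire'' set from $k^T$ to also count tokens reachable within the cluster), so that the constant-probability and cluster-count estimates already absorb the spread. A secondary, more routine obstacle is maintaining genuine $T$-interval connectivity across window boundaries: committing to $H$ per window only guarantees connected intersections inside a window, whereas a $T$-window straddling two committed subgraphs $H,H'$ intersects in $H\cap H'$, which need not be connected; I would force consecutive windows to share a common connected spanning subgraph, changing $H$ only gradually, which affects the per-window charge by at most a constant factor.
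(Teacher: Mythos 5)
Your setup (the potential $\Phi$, the choice $p=1-\Theta(1/T)$, the $\Omega(nk/T)$ initial deficit, cumulative per-window token sets with a union bound over $(k^T)^s$ sequences giving $O(\log n + T\log k)$ components, and even the overlapping-window fix for $T$-windows straddling block boundaries) matches the paper's proof closely. But the step you flag as ``the main obstacle'' --- an imported token pipelining through a cluster and becoming new at many interior nodes --- is exactly the step you have not resolved, and neither of your two proposed fixes (shallow clusters, or enlarging the combinatorial lemma to count $T$-hop reachable tokens) is what makes the argument work. There is also a second, related problem: you ``commit at the start of each window to a stable connected spanning subgraph $H$,'' but against a randomized algorithm the strongly adaptive adversary only sees the messages of the current round, so it cannot know at the start of the window which edges will turn out to be window-free; the cluster structure depends on tokens that have not yet been chosen.

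The paper resolves both issues with one device. Within a window of $2T$ rounds it recomputes, in every round $i$, the free-edge graph $E_{\calT_i}$ with respect to the \emph{cumulative} token assignment $\calT_i$ (each node labelled by all tokens it has sent so far in the window). These edge sets only shrink, so the component partition $\calS_{\calT_i}$ only refines over the window; in particular, a node that rebroadcasts an imported token simply loses its free edges to nodes lacking that token in their $K'(0)$-sets, and those edges are dropped from the round's graph rather than charged to the potential. Connector edges are then added \emph{incrementally}: $E_i\subseteq E_{\calT_{i-1}}$ contains exactly $|\calS_{\calT_i}|-|\calS_{\calT_{i-1}}|$ new edges to reconnect the components that just split, and each round's graph is the current free edges together with the cumulative connector set. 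The total number of connectors over the window is thus the \emph{final} component count, $O(\log n + T\log k)$ by \Cref{lemma:cliques}; each connector contributes at most $2$ tokens per round, giving the required $O(\log n + T\log k)$ per-round potential growth, and the monotone growth of the connector set (run for two overlapping window systems, as you anticipated) is what yields genuine $T$-interval connectivity without the adversary ever needing to look ahead. Without this dynamic refinement-plus-incremental-reconnection argument, your per-boundary charge of $O(T)$ is unsubstantiated, so the proposal as written does not yet constitute a proof.
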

\begin{proof}
    We assume that each of the sets $K'_u(0)$ independently contains
    each of the $k$ tokens with probability $p=1-\eps/T$ for a
    sufficiently small constant $\eps>0$. As before, we let $i_u(r)$
    be the token broadcast by node $u$ in round $r$ and call the set
    of pairs $(u,i_u(r))$ the token assignment of round $r$. In the
    analysis, we will also make use of token assignments of the form
    $\calT=\set{(u,I_u): u\in V}$, where $I_u$ is a set of tokens sent by
    some node $u$.
    
    Given a token assignment $\calT=\set{(u,I_u)}$, as in the previous
    subsection, an edge $\set{u,v}$ is free in particular if $I_u\!\subseteq\!
    K_v'(0)\land I_v\!\subseteq\! K_u'(0)$. Let $E_\calT$ be the free edges
    w.r.t.\ a given token assignment $\calT$.  Further, we define
    $\calS_\calT=\set{S_{\calT,1},\dots,S_{\calT,\ell}}$ to be the
    partition of $V$ induced by the components of the graph
    $(V,E_\calT)$.

    Consider a sequence of $2T$ consecutive rounds
    $r_1,\dots,r_{2T}$. For a node $v_j$ and round $r_i$, $i\in[2T]$, let
    $I_{i,j}:=\set{i_{v_j}(r_1),\dots,i_{v_j}(r_i)}$ be the set of
    tokens transmitted by node $v_j$ in rounds $r_1,\dots,r_i$ and let
    $\calT_i:=\set{(v_1,I_{i,1}),\dots,(v_n,I_{i,n})}$. As above, let
    $E_{\calT_i}$ be the free edges for the token assignment $\calT_i$
    and let $\calS_{\calT_i}$ be the partition of $V$ induced by the
    components of the graph $(V,E_{\calT_i})$. Note that for $j>i$,
    $E_{\calT_j}\subseteq E_{\calT_i}$ and $\calS_{\calT_j}$ is a
    sub-division of $\calS_{\calT_i}$.

    We construct edge sets $E_1,\dots,E_{2T}$ as follows. The set
    $E_1$ contains $|\calS_{\calT_1}|-1$ edges to connect the
    components of the graph $(V,E_{\calT_1})$. For $i>1$, the edge set
    $E_i$ is chosen such that $E_i\subseteq E_{\calT_{i-1}}$,
    $|E_i|=|\calS_{\calT_{i}}|-|\calS_{\calT_{i-1}}|$, and the graph
    $(V,E_{\calT_i}\cup E_1\cup\dots\cup E_i)$ is connected. Note that
    such a set $E_i$ exists by induction on $i$ and because
    $\calS_{\calT_i}$ is a sub-division of $\calS_{\calT_{i-1}}$.

    For convenience, we define
    $E_{\set{r_1,\dots,r_i}}:=E_1\cup\dots\cup E_i$. By the above
    construction, the number of edges in $E_{\set{r_1,\dots,r_i}}$ is
    $|\calS_{\calT_{i}}|-1$, where $|\calS_{\calT_i}|$ is the number
    of components of the graph $(V,E_{\calT_i})$. Because in each
    round, every node transmits only one token, the number of tokens
    in each $I_{i,j}\in \calT_i$ is at most $|I_{i,j}|\leq i\leq 2T$.
    By Lemma \ref{lemma:cliques}, if the constant $\eps$ is chosen
    small enough, the number of components of $(V,E_\calT)$ and
    therefore the size of $E_{\set{r_1,\dots,r_i}}$ is upper bounded
    by $|\calS_\calT|\leq \log n + T\log k$, w.h.p.

    We construct the dynamic graph as follows. For simplicity, assume
    that the first round of the execution is round $0$. Consider some
    round $r$ and let $r_0$ be the largest round number such that
    $r_0\leq r$ and $r_0\equiv 0\pmod{T}$. The edge set in round $i$
    consists of the the free edges in round $i$, as well as of the
    sets $E_{i_0-T,\dots,i}$ and $E_{i_0,\dots,i}$. The resulting
    dynamic graph is $T$-interval-connected. Furthermore, the number
    of non-free edges in each round is $O(\log n + T\log k)$. Because
    in each round, at most $2$ tokens are learned over each non-free
    edge, the theorem follows.
\hspace*{\fill}\qed\end{proof}

\subsection{Vertex Connectivity}
\label{sec:connectivity}

Rather than requiring more connectivity over time, we now consider the
case when the network is better connected in every round. If the
network is $c$-vertex connected for some $c>1$, in every round, each
set of nodes can potentially reach $c$ other nodes (rather than just
$1$). In \cite{KLO}, it is shown that for the basic greedy token
forwarding algorithm, one indeed gains a factor of $\Theta(c)$ if the
network is $c$-vertex connected in every round. We first need to state
two general facts about vertex connected graphs.

\begin{proposition}\label{lem:cconnect}
If in a graph $G$ there exists a vertex $v$ with degree at least $c$ such that $G - \set{v}$ is $c$-vertex connected then $G$
is also $c$-vertex connected. 
\end{proposition}


\begin{lemma}\label{lem:kconnectmindegree}
    For $c$, any $n$-node graph $G=(V,E)$ with minimum degree at least
    $2c-2$ can be augmented by $n$ edges to be $c$-vertex connected.
\end{lemma}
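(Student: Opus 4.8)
The starting point is a structural consequence of the degree hypothesis. Suppose $S \subseteq V$ with $|S| \le c-1$ is a vertex separator, and let $A$ be any connected component of $G-S$. Every $v \in A$ has all of its at least $2c-2$ neighbours inside $A \cup S$, so $|A| - 1 + |S| \ge 2c-2$ and hence $|A| \ge 2c-1-|S| \ge c$. Thus in $G$ every separator of size less than $c$ cuts off only pieces of size at least $c$, so there are at most $n/c$ of them and each side of every dangerous cut is large. This large-piece property is the resource I would spend throughout: it guarantees that whenever I must connect the two sides of a small cut, each side offers at least $c$ distinct vertices to attach edges to.

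The plan is to add at most $n$ edges that destroy all separators of size less than $c$, certifying $c$-connectivity at the end. Proposition \ref{lem:cconnect} gives the right bookkeeping tool: it lets me grow a $c$-connected subgraph by absorbing, one at a time, any vertex that already has $c$ neighbours in the part built so far. Concretely I would first secure a large $c$-connected backbone inside $G$ (a dense core that the degree bound forces to exist), and then attach the remaining vertices: each such vertex has degree at least $2c-2 \ge c$, so as long as at least $c$ of its neighbours already lie in the backbone it is absorbed for free, and otherwise a few repair edges suffice. The edges spent on repairs are charged against the vertices they newly make well connected.

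The step I expect to be the main obstacle is bounding the total number of added edges by $n$, because a careless scheme is genuinely wasteful: building a $c$-connected graph strictly vertex-by-vertex can force on the order of $c^2$ edges around the seed, which blows the budget when $c$ is large. The cure is exactly the large-piece property above: rather than repairing vertices individually, I would connect the two sides of each offending separator in bulk by adding a small matching between them, which the component-size fact always makes possible with distinct endpoints. The delicate part is arguing that handling the separators in a good order -- always attacking a smallest side first -- makes the repair edges pay for themselves, so that a potential counting the number of still-deficient vertices (or still-separable large pieces) drops by enough per added edge to cap the total at $n$. The uniform control over all small separators, rather than a single fixed one, is where the structural fact has to be invoked repeatedly.
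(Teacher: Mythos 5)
Your opening observation is correct and useful: with minimum degree $2c-2$, any component $A$ of $G-S$ for a separator $S$ with $|S|\le c-1$ satisfies $|A|\ge 2c-1-|S|\ge c$, so any fixed small separator leaves at most $n/c$ components. This is genuinely one half of what is needed (it bounds the quantity $b_c(G)$ appearing in the paper's argument). But from there the proposal stops exactly where the proof has to start. The entire content of the lemma is the quantitative claim that $n$ added edges suffice in total, and you explicitly defer it: ``the delicate part is arguing that handling the separators in a good order \dots makes the repair edges pay for themselves.'' No potential function is defined, no charging scheme is specified, and no argument is given that the matchings added for different (heavily overlapping, possibly exponentially many) small separators do not interfere with one another. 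Vertex-connectivity augmentation is subtle for precisely this reason, and the paper does not attempt an elementary argument: it invokes the Jackson--Jord\'an min--max theorem, which states that for $c$-independent graphs (a class containing all graphs of minimum degree $2c-2$) the optimal augmentation size is exactly $\max\{b_c(G)-1,\lceil t_c(G)/2\rceil\}$, and then checks in two lines that both terms are at most $n$. The second term, controlled by the total deficiency $\sum_i \left(c-|\Gamma(X_i)|\right)$ over disjoint families of fragments, is precisely what your sketch would have to reproduce and does not.

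Two further specific points. First, the proposed ``large $c$-connected backbone inside $G$ (a dense core that the degree bound forces to exist)'' is not justified: minimum degree $2c-2$ only guarantees, via Mader-type results, a subgraph of connectivity roughly $c/2$, not $c$, so the seed of your absorption process may itself have to be built with added edges, at a cost of order $c^2$ that the budget cannot absorb unless amortized --- again the missing step. Second, absorbing a leftover vertex via Proposition~\ref{lem:cconnect} can cost up to $c$ new edges for that single vertex when none of its $2c-2$ neighbours lie in the current backbone, and nothing in the proposal shows this happens for at most about $n/c$ vertices overall. The approach is a plausible plan, but as written it has a genuine gap at the one step the lemma is actually about.
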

 \begin{proof}
     We specialize the much more powerful results of
     \cite{jackson2005independence} which characterize the minimum
     number of augmentation edges needed to our setting:

     According to \cite[p41, criterion 4]{jackson2005independence} any
     graph with minimum degree at least $2c-2$ is $c$-independent and
     for such a graph $G$ it holds that the minimum number of edges
     needed to make it $c$-vertex connected is exactly $\max\{b_c(G)-1,
     \ceil{t_c(G)/2}\}$ \cite[Theorem
     3.12]{jackson2005independence}. Here, $b_c(G)$ is the maximum
     number of connected components $G$ can be dissected by removing
     $c-1$ nodes (which is at most $n-c+1$) and $t_c$ is at most the
     maximum value for $\sum_i c - |\Gamma(X_i)|$ that can be obtained
     for a disjoint node partitioning $X_1, X_2, \ldots, X_p$
     \cite[p33]{jackson2005independence}. Here $\Gamma(X_i)$ is the
     set of nodes neighboring $X_i$, i.e., $\Gamma(X_i)=\set{v\in
         V\setminus X_i: \exists u \in X_i \text{ s.t.\ } \set{u,v}\in
         E}$.  Because every node has degree at least $2c-2$,
    $|\Gamma(X_i)| \geq (2c-2) - |X_i| + 1$ and thus
     \[
     \sum_i c - |\Gamma(X_i)| \leq 
     \sum_i c - (2c-2) - |X_i| + 1 = \sum_i |X_i| - \sum_i (c-1) \leq
     n. \vspace{-0.5cm}
     \]
 \hspace*{\fill}\qed\end{proof}

 We will also need the following basic result about weighted sums of
 Bernoulli random variables.

\begin{lemma}\label{lem:increase}
    For some $c$ let $\ell_1, \ell_2, \ldots, \ell_\tau$ be positive integers with
    $\ell = \sum_i \ell_i > c$.  Furthermore, let $X_1, X_2, \ldots, X_\tau$
    be i.i.d.\ Bernoulli variables with $\Pr[X_i = 1] = \Pr[X_i = 0] =
    1/2$ for all $i$.  For any integer $x > 1$ it holds that:
    \[
    \Pr\left[ \min \set{|L|\ :\ L\subseteq[\tau]\land 
            \sum_{i \in \{j | X_j = 1\} \cup L} \ell_i \geq
        c} > x \right] < 2^{-\Theta(\frac{x\ell}{c})}.
    \]
    That is, the probability that $x$ of the random variables need to
    be switched to one after a random assignment in order get $\sum_i
    X_i \ell_i \geq c$ is at most $2^{-\Theta(\frac{x\ell}{c})}$.
\end{lemma}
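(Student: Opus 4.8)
The plan is to first convert the probabilistic statement about the random on-set $\set{j : X_j = 1}$ into a clean deterministic description of the random variable inside the probability, and only then to attack the resulting tail by separating the weights according to their magnitude.

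First I would use monotonicity: adding an index to $L$ can only increase $\sum_{i\in\set{j:X_j=1}\cup L}\ell_i$, so the cheapest way to reach $c$ is greedy, i.e.\ to switch on the heaviest currently-\emph{off} blocks. Writing $W:=\sum_{i:X_i=1}\ell_i$ for the random on-weight and $\sigma_x$ for the sum of the $x$ heaviest off-weights, the event in the statement is exactly
\[
\set{\min|L|>x}=\set{W+\sigma_x<c}.
\]
In particular $W<c$ is necessary, and since $\mathbb{E}[W]=\ell/2>c/2$ this is a genuine lower-tail deviation of $W$ below its mean; the extra term $\sigma_x$ sharpens "$W<c$" by additionally forcing the $x$ heaviest off-blocks to be light. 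Note also that, as all $\ell_i$ are positive integers, switching on any $c$ blocks already yields weight $\geq c$, so the event is vacuous unless $x<c$; I may therefore freely assume $x<c$ throughout.

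Next I would split the blocks at the threshold $\theta:=c/(2x)$. For the \emph{heavy} blocks (weight $\geq\theta$): there are at most $2x\ell/c$ of them, and whenever at least $x$ of them are off we get $\sigma_x\geq x\theta=c/2$, so on that part of the probability space the event forces $W<c/2$, a constant-factor deviation of $W$ below its mean, which a Chernoff bound for weighted sums of independent Bernoullis (the same union-bound-plus-concentration style already used in \Cref{lemma:cliques}) controls by $\exp(-\Theta(c))\le\exp(-\Theta(x))$ using $x<c$. The complementary possibility, that fewer than $x$ heavy blocks are off, I would either rule out by a union bound over the $O(\binom{2x\ell/c}{x})$ relevant heavy subsets when heavy blocks are plentiful, or fold into the light case when they are few. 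This leaves the \emph{light} blocks (weight $<\theta$), where $\sigma_x<x\theta=c/2$ contributes almost nothing and the event is essentially the pure lower tail that the weighted Bernoulli sum of the light mass falls short of $c$ by a definite margin; here I would again apply a concentration bound, now controlling the variance proxy through $\sum_i\ell_i^2\le\theta\,\ell$.

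The hard part will be precisely this many-light-blocks regime. A direct Hoeffding estimate using $\sum_i\ell_i^2\le\theta\,\ell$ yields only a bound of the shape $2^{-\Theta(x^2\theta/c)}$, i.e.\ an exponent quadratic in $x$ but scaled by the (small) block weight, rather than the claimed exponent $\Theta(x\ell/c)$ that is \emph{linear} in $x$ and carries the full mass $\ell$. Merely regrouping the light mass into $\Theta(x\ell/c)$ weight-$\Theta(c/x)$ super-blocks does not help, since such a regrouping leaves both $W$ and its variance proxy unchanged. Thus the crux is to extract a linear-in-$x$, mass-$\ell$ decay out of the light part, and I expect that this is where the hypothesis $\ell>c$ must be used in an essential and nonobvious way, together with a carefully chosen threshold and grouping; getting this last exponent to come out as $x\ell/c$ is where I anticipate essentially all of the real technical work to lie.
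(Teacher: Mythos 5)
Your setup is right (the event is exactly $W+\sigma_x<c$, with $W$ the random on-weight and $\sigma_x$ the sum of the $x$ heaviest off-weights), but the proposal stops exactly where the proof has to happen, and the step you flag as ``the real technical work'' is the whole content of the lemma; as written this is a genuine gap, not a proof. The missing observation is the following deterministic reduction: sort the weights decreasingly; if the $x$ largest weights already sum to at least $c$, then $\min|L|\le x$ holds with probability $1$ and there is nothing to prove. Otherwise \emph{every} weight $\ell_i$ with $i\ge x$ satisfies $\ell_i< c/x$, and $\sum_{i>x}\ell_i>\ell-c=\Theta(\ell)$ (the bound is only meaningful when $\ell$ exceeds $c$ by a constant factor, which is also the regime of the application). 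Now the event forces $\sum_i X_i\ell_i<c$, hence $\sum_{i>x}X_i\ell_i<c$, and this last sum is a sum of independent terms each bounded by $c/x$ with mean $\Theta(\ell)\gg c$. Rescaling by $x/c$ gives independent $[0,1]$-valued summands with mean $\Theta(x\ell/c)$ that must fall below $x$, a constant-fraction lower deviation, so a single multiplicative Chernoff bound yields $2^{-\Theta(x\ell/c)}$. No heavy/light case split, no union bound over subsets of heavy blocks, and no separate treatment of $\sigma_x$ is needed: the parameter $x$ enters the exponent \emph{only} through the pointwise bound $\ell_i<c/x$.

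Two of your intermediate assertions are also off in ways that matter. First, your heavy-block branch concludes with $\exp(-\Theta(c))\le\exp(-\Theta(x))$, but the target is $2^{-\Theta(x\ell/c)}$, which is far smaller than $2^{-\Theta(x)}$ (and than $2^{-\Theta(c)}$) whenever $\ell\gg c$; that branch as stated does not close. Second, your Hoeffding computation for the light blocks uses a shortfall of order $c$ (coming from $\sigma_x$), whereas in the nontrivial regime the shortfall of $W$ below its mean is $\Theta(\ell)$, which already gives exponent $\Theta(t^2/\sum_i\ell_i^2)=\Theta(\ell^2/((c/x)\ell))=\Theta(x\ell/c)$ --- so even your own route works once the deviation is measured from the mean $\ell/2$ rather than from $c$. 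Relatedly, your dismissal of the ``regroup into $\Theta(x\ell/c)$ chunks of weight $\Theta(c/x)$'' heuristic is exactly backwards: that heuristic is the correct picture, because the lower-tail multiplicative Chernoff bound for $[0,1]$-bounded summands has exponent linear in the (scaled) mean, not governed by the unscaled variance proxy.
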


 \begin{proof}
     Fix a positive integer $x$. Suppose without loss of generality
     that $\ell_1 \geq \ell_2 \geq \ldots \geq \ell_l$. Clearly $\min
     \{|L|\ | \ \sum_{i \in \{j | X_j = 1\} \cup L} \ell_i \geq c \}
     \leq x$ always holds if $\sum_{i \leq x} \ell_i \geq c$. Thus,
     there is nothing to show unless $\ell_i \leq c/x$ for all $i \geq
     x$ and $\sum_{i>x} \ell_i > \frac{2}{3}\ell$. For this case,
     consider a scaling by a factor of $\frac{x}{c}$ of all the
     values. The scaled values $\ell_i \frac{x}{c}$ for $i \geq x$ are
     at most one and the scaled expectation of the sum is
     $E\big[\sum_{i>x} X_i \cdot\big(\ell_i \frac{x}{c}\big) \big] >
     \frac{2}{3}\ell\cdot\frac{x}{c}\cdot\frac{1}{2}=\frac{\ell
         x}{3c}$. A standard Chernoff bound then shows that $\Pr\big[
     \sum_{i>x} X_i \ell_i < c \big] < 2^{-\Theta(\frac{x\ell}{c})}$.
 \hspace*{\fill}\qed\end{proof}

To prove our lower bound for always $c$-connected graph, we initialize
the $K'$-sets as for always connected graphs, i.e., each token $i$ is
contained in every set $K'_u(0)$ with constant probability $p$ (we
assume $p=1/2$ in the following). In each
round, the adversary picks a $c$-connected graph with as few free
edges as possible. Using Lemmas \ref{lem:cconnect} and
\ref{lem:kconnectmindegree}, we will show that a graph with a small
number of non-free edges can be constructed as follows. First, as long
as we can, we pick vertices with at least $c$ neighbors among the
remaining nodes. We then show how to extend the resulting graph to a
$c$-connected graph.

\begin{lemma}\label{lem:cconLB1}
    With high probability (over the choices of the sets $K'_u(0)$),
    for every token assignment $(u,I_u(r))$, the largest set $S$ for
    which no node $u\in S$ has at least $c$ neighbors in $S$ is of
    size $O(c\log n)$.
\end{lemma}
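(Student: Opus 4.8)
The plan is to reduce the statement to the largest-independent-set bound that is already implicit in the proof of \Cref{lemma:freeedges}. Recall that in this section $b=1$ and $p=1/2$, so by \eqref{eq:freeedge} a pair $\set{u,v}$ is free exactly when $i_u(r)\in K_v'(0)$ and $i_v(r)\in K_u'(0)$ (or when $i_u(r)=i_v(r)$); fix a token assignment and let $F(r)$ be the graph on $V$ whose edges are the free pairs. Observe that a set $S$ with the property in the lemma — no $u\in S$ has at least $c$ free neighbors inside $S$ — is precisely an induced subgraph $F(r)[S]$ of maximum degree at most $c-1$. The key, purely combinatorial, observation is that such a graph is $c$-colorable by greedy coloring, so $F(r)[S]$ splits into at most $c$ independent sets and therefore contains an independent set of $F(r)$, i.e.\ a collection of pairwise \emph{non-free} nodes, of size at least $|S|/c$.

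It thus suffices to show that, with high probability over the sets $K_u'(0)$ and simultaneously for every token assignment, $F(r)$ contains no independent set of size larger than some $s'=O(\log n)$; the bound $|S|\le c\,s' = O(c\log n)$ then follows immediately. To bound the largest independent set I would reuse the union-bound argument of \Cref{lemma:freeedges}. Fix a candidate set of $s'$ nodes together with the tokens they broadcast; these tokens must be distinct, since two nodes sending the same token would form a free edge and destroy independence. Consequently the $\binom{s'}{2}$ events ``this pair is non-free'' are mutually independent, each of probability $1-p^2=3/4$, so the probability that all pairs are non-free is $(3/4)^{\binom{s'}{2}}=e^{-\Theta(s'^2)}$. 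A union bound over the at most $n^{s'}$ choices of node set and $k^{s'}$ choices of tokens gives $(nk)^{s'}e^{-\Theta(s'^2)}$, and taking $s'=\Theta(\log(nk))=\Theta(\log n)$ (using that $k$ is polynomial in $n$) with a sufficiently large constant drives this below $n^{-\Omega(\log n)}$. Hence with high probability $F(r)$ has no independent set of size $s'$ for any token assignment, and combining with the coloring bound yields $|S|/c\le s'=O(\log n)$.

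The only point requiring care is that the high-probability guarantee must hold \emph{uniformly} over every token assignment the (centralized, adaptive) algorithm could ever produce, not just for a fixed one; this is exactly what the union bound over all $k^{s'}$ token choices buys, and is the same device used in \Cref{lemma:freeedges}. The associated subtlety — that the per-pair non-free events are independent — is handled automatically by the distinctness of tokens forced within an independent set by the same-token clause of \eqref{eq:freeedge}. I expect no genuine obstacle here: the reduction via chromatic number is elementary and the tail estimate is a verbatim adaptation of the earlier lemma, so the main verification is simply that choosing the constant in $s'=\Theta(\log n)$ large enough boosts the earlier constant-probability statement to a high-probability one.
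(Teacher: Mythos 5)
Your proof is correct, but it takes a genuinely different and noticeably simpler route than the paper's. You reduce the claim to the independent-set bound already implicit in \Cref{lemma:freeedges}: a set $S$ in which every node has fewer than $c$ free neighbors induces a graph of maximum degree at most $c-1$, hence is $c$-colorable and contains a pairwise non-free subset of size at least $|S|/c$; re-running the union bound of \Cref{lemma:freeedges} with a larger constant in $s'=\Theta(\log(nk))$ upgrades the constant-probability statement to a high-probability one and gives $|S|\le c\,s'=O(c\log n)$. All the delicate points check out: the distinct-token observation makes the $\binom{s'}{2}$ non-free events mutually independent (each event is an intersection of indicators over disjoint token--node membership pairs), the union bound over $n^{s'}k^{s'}$ choices handles the uniformity over token assignments, and working with the sufficient condition \eqref{eq:freeedge} (a subset of the true free edges) is conservative in the right direction. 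The paper instead proves the statement directly: it partitions $S$ by the token sent, splits the classes into a left and a right side around a median token $j^*$, and uses Chernoff bounds to show that some node in each left class acquires $c$ free edges to the right side. That argument is longer, but the left/right decomposition and the quantities $s_{i,j}$ it introduces are precisely the scaffolding reused in \Cref{lem:cconLB2} to sharpen the bound from $\Omega(nk/(c^2\log n))$ to $\Omega(nk/(c\log^{3/2}n))$; your chromatic-number shortcut proves the lemma (and hence \Cref{cor:simplecconn}) more cleanly, but would not by itself supply the structure needed for that later improvement.
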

\begin{proof}
    Consider some round $r$ with token assignment
    $\set{\big(u,i_u(r)\big)}$. We need to show that for any set $S$
    of size $s=\alpha c\log n$ for a sufficiently large constant
    $\alpha$, at least one node in $S$ has at least $c$ free neighbors
    in $S$ (i.e., the largest degree of the graph induced by the free
    edges between nodes in $S$ is at least $c$). 

    We will use a union bound over all $n^s$ sets $S$ and all $k^s$
    possibilities for selecting the tokens sent by these nodes. We
    want to show that if the constant $\alpha$ is chosen sufficiently
    large, for each of these $2^{s \log nk}$ possibilities we have a
    success probability of at least $1 - 2^{-2s \log nk}$.

    We first partition the nodes in $S$ according to the token sent
    out, i.e., $S_i$ is the subset of nodes sending out token
    $i$. Note that if for some $j$ we have $S_j > c$ we are done since
    all edges between nodes sending the same token are free. With
    this, let $j^*$ be such that $\sum_{i<j^*} |S_i| \geq s/3$ and
    $\sum_{i>j^*} |S_i| \geq s/3$. We now claim that for every $j <
    j^*$, with probability at most $2^{-6|S_j|\log nk}$, there does
    not exist a node in $S_j$ that has at least $c$ free edges to
    nodes in $S' = \bigcup_{i>j^*} S_j$. Note that the events that a
    node from $S_j$ has at least $c$ free edges to nodes in $S'$ are
    independent for different $j$ as it only depends on which nodes
    $u$ in $S'$ have $j$ in $K'_u(0)$ and on the $K'(0)$-sets of
    the nodes in $S_j$. The claim that we have a node with degree $c$
    in $S$ with probability at least $1-2^{-2s\log nk}$ then follows
    from the definition of $j^*$.

    Let us therefore consider a fixed value $j$.  We first note that
    for a fixed $j$ by standard Chernoff bounds with probability at
    least $1-2^{-\Theta(s)}$, there at least $s/3 \cdot p/2=s/12$
    nodes in $S'$ that have token $j$ in their initial $K'$-set. For
    $\alpha$ sufficiently large, this probability is at least
    $1-2^{-7c \log nk} \geq 1-2^{-7|S_j| \log nk}$. In the following,
    we assume that there are at least $s/12$ nodes $u$ in $S'$ for
    which $j\in K'_u(0)$.

    Let $s_{j,i}$ for any $i>j^*$ denote the number of nodes in $S_i$
    that have token $j$ in the initial $K'$-set. The number of free
    edges to a node $u$ in $S_j$ is at least $\sum_{i > j^*} X_{u,i}
    s_{i,j}$, where the random variable $X_{u,i}$ is $1$ if node $u$
    initially has token $i$ in $K_u'(0)$ and $0$ otherwise (i.e.,
    $X_{u,i}$ is a Bernoulli variable with parameter $1/2$). Note that
    since $\sum_i s{j,i}\geq s/12$, the expected value of the number
    of free edges to a node $u$ in $S_j$ is at least $s/24$. By a
    Chernoff bound, the probability that the number of free edges from
    a node $u$ in $S_j$ does not deviate by more than a constant
    factor with probability $1-2^{-\Theta(s/c)}$. Note that
    $s_{j,i}\leq c$ since $|S_j|\leq c$.  For $\alpha$ large enough
    this probability is at least $1-2^{-7\log nk}$. Because the
    probability bound only depends on the choice of $K'_u(0)$, we have
    independence for different $u\in S_j$. Therefore, given that at
    least $s/12$ nodes in $S'$ have token $j$, the probability that no
    node in $S_j$ has at least $c$ neighbors in $S'$ can be upper
    bounded as $\big(1-2^{-7|S_j|\log nk}\big)$. Together with the
    bound on the probability that at least $s/12$ nodes in $S'$ have
    token $j$ in their $K'(0)$ set, the claim of the lemma follows.
\hspace*{\fill}\qed\end{proof}

Lemma \ref{lem:cconLB1} by itself directly leads to a lower bound for
token forwarding algorithms in always $c$-vertex connected
graphs. 

\begin{corollary}\label{cor:simplecconn}
    Suppose an always $c$-vertex connected dynamic network with $k$
    tokens in which nodes initially know at most a constant fraction
    of the tokens on average.  Then, any centralized token-forwarding
    algorithm takes at least $\Omega\big(\frac{nk}{c^2\log n}\big)$
    rounds to disseminate all tokens to all nodes.
\end{corollary}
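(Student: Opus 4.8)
The plan is to reuse the potential-function framework of \Cref{sec:original} essentially verbatim, with the sets $K'_u(0)$ populated by including each token independently with probability $p=1/2$. As in \Cref{thm:original}, it suffices to exhibit, for every round and every token assignment, an always $c$-vertex connected graph $G(r)$ whose number of non-free edges is small: since $b=1$, each non-free edge raises $\Phi$ by at most $2$, so if $G(r)$ has $O(c^2\log n)$ non-free edges the potential grows by only $O(c^2\log n)$ per round. The whole argument then reduces to showing that the adversary can always build a $c$-connected graph with this few non-free edges.

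To construct $G(r)$ I would use the greedy peeling suggested in the text and justified by \Cref{lem:cconLB1}. Starting from $V$, repeatedly select a vertex that has at least $c$ free neighbors among the currently remaining vertices, record $c$ of these (free) edges, and delete the vertex; by the contrapositive of \Cref{lem:cconLB1} such a vertex exists as long as the remaining set is larger than the $O(c\log n)$ bound guaranteed there, so the process halts at a core $R$ with $|R|=O(c\log n)$, and all edges introduced so far are free. Next I make the induced graph on $R$ be $c$-vertex connected using at most $O(c|R|)=O(c^2\log n)$ additional edges; these are the only non-free edges, and a sparse $c$-connected graph on $|R|$ vertices (for instance obtained through \Cref{lem:kconnectmindegree}) indeed needs only this many. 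Finally, I reinsert the peeled vertices in the reverse order of their deletion: when a vertex $v$ is put back, all of its $c$ recorded free neighbors are already present (they were deleted after $v$), so $v$ has degree at least $c$ and, by \Cref{lem:cconnect}, the graph remains $c$-vertex connected. By induction the final $G(r)$ is $c$-vertex connected with only $O(c^2\log n)$ non-free edges.

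To finish, a standard Chernoff bound shows that with $p=1/2$ and an initial average knowledge of at most a constant fraction $\alpha<1$ of the tokens, $\Phi(0)\leq(1-\gamma)nk$ for some constant $\gamma>0$ with high probability, whereas completion requires $\Phi=nk$. Combining this gap of $\Theta(nk)$ with the per-round growth bound of $O(c^2\log n)$ yields the claimed $\Omega\big(\frac{nk}{c^2\log n}\big)$ lower bound, after a union bound so that the high-probability events of \Cref{lem:cconLB1} and of the initial-potential estimate hold simultaneously. I expect the main obstacle to be verifying the correctness of the peel-and-rebuild construction: one must check both that the core can be made $c$-connected while spending only $O(c^2\log n)$ non-free edges and, more delicately, that the inductive reinsertion always presents each vertex with its $c$ recorded free neighbors so that \Cref{lem:cconnect} genuinely applies at every step. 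The probabilistic content — that peeling terminates at an $O(c\log n)$ core simultaneously for every token assignment — is exactly what \Cref{lem:cconLB1} supplies, so once the construction is pinned down the remaining estimates are routine.
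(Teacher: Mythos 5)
Your proposal is correct and follows essentially the same route as the paper's proof: greedy peeling down to an $O(c\log n)$ core via \Cref{lem:cconLB1}, augmenting the core to $c$-connectivity with $O(c^2\log n)$ non-free edges via \Cref{lem:kconnectmindegree}, reinserting peeled vertices via \Cref{lem:cconnect}, and concluding with the potential-function argument. The only difference is that you spell out the reverse-order reinsertion explicitly, which the paper leaves implicit.
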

\begin{proof}
    By Lemma \ref{lem:cconLB1}, we know that there exists $K'(0)$-sets
    such that for every token assignment after adding all free edges,
    the size of the largest induced subgraph with maximum degree less
    than $c$ is $O(c\log n)$. By Lemma \ref{lem:cconnect}, it suffices
    to make the graph induced by these $O(c\log n)$ nodes $c$-vertex
    connected to have a $c$-vertex connected graph on all $n$
    nodes. To achieve this, by Lemma \ref{lem:kconnectmindegree}, it
    suffices to increase all degrees to $2c-2$ and add another
    $O(c\log n)$ edges. Overall, the number of non-free edges we have
    to add for this is therefore upper bounded by $O(c^2\log
    n)$. Hence, the potential function increases by at most $O(c^2\log
    n)$ per round and since we can choose the $K'(0)$-sets so that
    initially the potential is at most $\lambda nk$ for a constant
    $\lambda<1$, the bound follows.
\hspace*{\fill}\qed\end{proof}

As shown in the following, by using a more careful analysis, we can
significantly improve this lower bound for $c=\omega(\log n)$.  Note
that the bound given by the following theorem is at most an $O(\log^{3/2}
n)$ factor away from the simple ``greedy'' upper bound.

\begin{theorem}\label{lem:cconLB2}
    Suppose an always $c$-vertex connected dynamic network with $k$
    tokens in which nodes initially know at most a constant fraction
    of the tokens on average.  Then, any centralized token-forwarding
    algorithm takes at least $\Omega\big(\frac{nk}{c\log^{3/2} n}\big)$
    rounds to disseminate all tokens to all nodes.
\end{theorem}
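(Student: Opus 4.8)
The plan is to reuse the potential-function framework of Section~\ref{sec:connectivity}. I initialize the sets $K'_u(0)$ so that each token lies in each set independently with probability $p=1/2$, so that whp the initial potential \eqref{eq:potential} is at most $\lambda nk$ for a constant $\lambda<1$. It then suffices to show that, whp over the choice of the $K'$-sets and \emph{simultaneously} for every token assignment $\set{(u,i_u(r))}$, the adversary can exhibit a $c$-vertex connected graph $G(r)$ in which the number of non-free edges is only $O\big(c\log^{3/2}n\big)$; since each non-free edge raises the potential by at most $2$ (here $b=1$) and the potential must reach $nk$, this yields the claimed $\Omega\big(nk/(c\log^{3/2}n)\big)$ bound.

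To build such a graph I would proceed as in Corollary~\ref{cor:simplecconn}: peel vertices one at a time using Lemma~\ref{lem:cconnect}, always removing a vertex whose degree into the current remainder is at least $c$, and using Lemma~\ref{lem:cconLB1} to guarantee that free edges alone suffice until the remainder has shrunk to a ``core'' of size $O(c\log n)$; on this core I finally invoke Lemma~\ref{lem:kconnectmindegree}. The entire cost is then the number of non-free edges spent inside the core, i.e. the total degree deficit $\sum_u (c-F_u)^+$, where $F_u$ is the free degree of $u$ into the remaining core at the moment it is peeled. Bounding every deficit crudely by $c$ reproduces the $O(c^2\log n)$ of Corollary~\ref{cor:simplecconn}; the whole point is to show that the deficits are almost always far smaller.

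The improvement comes from controlling the \emph{distribution} of the deficits through Lemma~\ref{lem:increase} instead of by the trivial value $c$. Fix the peeled vertex $u$ (transmitting token $j$) together with the remaining core $S$, group $S$ by transmitted token, and let $\ell_i$ be the number of nodes of group $i$ that are receptive to $j$; then the free degree of $u$ is the weighted Bernoulli sum $\sum_i X_{u,i}\ell_i$ with $X_{u,i}=\mathbf 1[i\in K'_u(0)]$ and total weight $\ell=\Theta(|S|)$. Inside the core every same-token group has size at most $c$ (a larger group would be an all-free clique, contradicting membership in the core), so all weights satisfy $\ell_i\le c$. Lemma~\ref{lem:increase} then controls, via a tail of the form $2^{-\Theta(x\ell/c)}$, how many groups $u$ must be attached to by non-free edges to raise its degree to $c$, and combined with $\ell_i\le c$ this yields a correspondingly sharp tail on the deficit $(c-F_u)^+$. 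Summing this tail over the $O(c\log n)$ core vertices and over the deficit levels — while using Lemmas~\ref{lem:cconnect} and~\ref{lem:kconnectmindegree} to turn the low-degree augmentation into a genuinely $c$-connected graph — is what I expect to bring the total number of non-free edges down to $O(c\log^{3/2}n)$.

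The hard part will be making this distributional bound uniform over all token assignments at once. Since the core is itself selected by the random $K'$-sets, one cannot re-expose fresh independent randomness inside it, so the tail supplied by Lemma~\ref{lem:increase} must be played off against the $\log nk$ entropy paid per vertex in the union bound over the possible cores and token assignments. Arranging this balance — exploiting the weight bound $\ell_i\le c$ to sharpen the exponent $\Theta(x\ell/c)$ — so that exactly one factor of $\sqrt{\log n}$ survives, rather than the $\log n$ that would only recover $O(c\log^2 n)$, is the crux of the whole argument.
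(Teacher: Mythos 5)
Your skeleton is right (potential function, grouping the core by transmitted token, Lemma~\ref{lem:increase}, and balancing its tail against the $\log(nk)$ per-vertex entropy of the union bound), but there is a genuine gap in the quantity you propose to bound: you charge the adversary for \emph{non-free edges}, i.e.\ for the degree deficits $\sum_u (c-F_u)^+$, and this currency cannot get below $\Theta(c^2\log n)$. The surviving core is by definition a set of up to $\Theta(c\log n)$ nodes each with free degree $<c$ inside the core, and the union bound over the $2^{\Theta(s\log nk)}$ choices of core and token assignment is far too weak to exclude cores in which every free degree is, say, below $c/2$ (the per-node tail from Lemma~\ref{lem:increase} is only $2^{-\Theta(\ell/c)}=2^{-\Theta(\sqrt{\log n})}$ once $\ell$ is corrected as below, which loses to $2^{\log nk}$ per node). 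Since every node of a $c$-vertex-connected graph must have degree at least $c$, such a core forces $\Omega(c\cdot c\log n)$ non-free edge endpoints, so the statement ``only $O(c\log^{3/2}n)$ non-free edges are needed'' is not provable this way and is most likely false. The paper's improvement comes from changing the currency: instead of adding non-free edges, the adversary \emph{adds tokens to the $K'$-sets}. Adding token $i$ to $K'_u$ costs exactly one unit of potential but simultaneously makes free all $\ell_i\le c$ edges from $u$ to the nodes of group $i$ that are receptive to $u$'s token. Lemma~\ref{lem:increase} counts precisely these switches $X_{u,i}\colon 0\to 1$, i.e.\ added tokens, not edges; if you convert switches back into edge deficits you lose a factor of up to $c$ and land back at $c^2$. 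With the token currency, the total number of switches is $O(s\sqrt{\log n})=O(c\log^{3/2}n)$ w.h.p.\ against the union bound, after which only $O(s)=O(c\log n)$ genuinely non-free edges are needed via Lemma~\ref{lem:kconnectmindegree} (note also that this lemma requires raising the minimum degree to $2c-2$, not to $c$).

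A second, related gap is your claim that the total receptive weight is $\ell=\Theta(|S|)$. For a fixed token $j$ this holds only with probability $1-2^{-\Theta(s)}$, which does not survive the union bound over the $2^{\Theta(s\log nk)}$ adversarial choices of $S$ and assignment. The paper therefore runs a two-stage augmentation: first it shows that w.h.p.\ at most $O(\log n)$ ``left'' tokens have fewer than $s/\sqrt{\log n}$ receptive ``right'' nodes, and repairs those by adding the token to $s/\sqrt{\log n}$ right-side $K'$-sets at a cost of $O(\log n\cdot s/\sqrt{\log n})=O(c\log^{3/2}n)$ potential; only then does it apply Lemma~\ref{lem:increase} with $\ell\ge s/\sqrt{\log n}$, giving a per-node tail $2^{-\Theta(x\sqrt{\log n})}$ and another $O(c\log^{3/2}n)$ switches in total. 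The threshold $s/\sqrt{\log n}$ is exactly what equalizes the costs of the two stages and is where the single factor $\sqrt{\log n}$ you are looking for actually comes from; without this step your exponent $\Theta(x\ell/c)$ is not under control for adversarially chosen tokens.
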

\begin{proof}
    We use the same construction as in \Cref{lem:cconLB1} to obtain a
    set $S$ of size $|S| = s = \alpha c \log n$ for a sufficiently
    large constant $\alpha>0$ such that $S$ needs to be augmented to a
    $c$-connected graph. Note that we want the set to be of size $s$
    and therefore we do not assume that in the induced subgraph, every
    node has degree less than $c$. We improve upon \Cref{lem:cconLB1}
    by showing that it is possible to increase the potential function
    by adding a few more tokens to the $K'$-sets, so that afterwards
    it is sufficient to add $O(s)$ additional non-free edges to $S$ to
    make the induced subgraph $c$-vertex connected.  Hence, an
    important difference is that are not counting the number of edges
    that we need to add but the number of tokens we need to give away
    (i.e., add to the existing $K'$-sets).

    We first argue that w.h.p., it is possible to raise the minimum
    degree of vertices in the induced subgraph of $S$ to $2c$ without
    increasing the potential function by too much. Then we invoke
    \Cref{lem:kconnectmindegree} and get that at most $O(s)$ more edges
    are then needed to make $S$ induce a $c$-connected graph as
    desired.

    We partition the nodes in $S$ according to the token sent out in
    the same way as in the proof of \Cref{lem:cconLB1}, i.e., $S_i$ is
    the subset of nodes sending out token $i$. Let us first assume
    that no set $S_i$ contains more than $s/3$ nodes. We can then
    divide the sets of the partition into two parts with at least
    $s/3$ nodes each. To argue about the sets, we rename the tokens
    sent out by nodes in $S$ as $1,2,\dots$ so that we can find a
    token $j^*$ for which $\sum_{j=1}^{j^*}|S_j|\geq s/3$ and
    $\sum_{j>j^*}|S_j|\geq s/3$. We call the sets $S_j$ for $j\leq
    j^*$ the left side of $S$ and the sets $S_j$ for $j>j^*$ the right
    side of $S$. If there is a set $S_i$ with $|S_i|>s/3$, we define
    $S_i$ to be the right side and all other sets $S_j$ to be the left
    side of $S$. We will show that we can increase the potential
    function by at most $O(s\sqrt{\log n})=O(c\log^{3/2}n)$ such that
    all the nodes on the left side have at least $2c$ neighbors on the
    right side. If all sets $S_i$ are of size at most $s/3$,
    increasing the degrees of the nodes on the right side is then done
    symmetrically. If the right side consists of a single set $S_i$ of
    size at least $s/3$, for $\alpha$ large enough we have $s/3\geq
    2c+1$ and therefore nodes on the right side already have degree at
    least $2c$ by just using free edges.

    We start out by adding some tokens to the sets $K'_u$ for nodes
    $u$ on the right side such that for every token $j\leq j^*$ on the
    left side, there are at least $s/\sqrt{\log n}$ nodes $u$ on the
    right side for which $j\in K'_u$. Let us consider some fixed token
    $j\leq j^*$ from the left side. Because every node $u$ on the
    right side has $j\in K'_u(0)$ with probability $1/2$, with
    probability at least $1 - 2^{-\Theta(s)}$, at least $s/\sqrt{\log
        n}$ nodes u on the right side have $j\in K'_u(0)$. For such a
    token $j$, we do not need to do anything. Note that the events
    that $j\in K'_u(0)$ are independent for different $j$ on the left
    side. Therefore, for a sufficiently large constant $\beta$ and a
    fixed collection of $\beta\log n$ tokens $j$ sent by nodes on the
    left side, the probability that none of these tokens is in at
    least $s/\sqrt{\log n}$ sets $K'_u(0)$ for $u$ on the right side
    is at most $2^{-\gamma s\log n}$ for a given constant
    $\gamma>0$. As there are at most $s$ tokens sent by nodes on the
    left side, the number of collections of $\beta\log n$ tokens is at
    most
    \[
    {s\choose \beta\log n} \leq 
    \left(\frac{es}{\beta\log n}\right)^{\beta\log n} =
    \left(\frac{e\alpha c}{\beta}\right)^{\beta\log n} =
    2^{\Theta(\log c \log n)},
    \]
    which is less than $2^{s\log n}$ for sufficiently large
    $\alpha$. Hence, with probability at least $1-2^{-(\gamma-1)s\log n}$, for
    at most $\beta\log n$ tokens $j$ on the left side there are less
    than $s/\sqrt{\log n}$ nodes $u$ on the right that have $j\in
    K'_u(0)$.  For these $O(\log n)$ tokens $j$, we add to $j$ to
    $K'_u$ for at most $s/\sqrt{\log n}$ nodes $u$ on the right side,
    such that afterwards, for every token $j$ sent by a node on the
    left side, there are at least $s/\sqrt{\log n}$ nodes $u$ on the
    right for which $j\in K'_u$. Note that this increases the
    potential function by at most $O(s\sqrt{\log n}) =
    O(c\log^{3/2}n)$.

    We next show that by adding another $O(c\log^{3/2}n)$ tokens to
    the $K'$-sets of the nodes on the left side, we manage to get that
    every node $u$ on the left side has at least $2c$ free neighbors
    on the right side. For a token $j\leq j^*$ sent by some node  on
    the left side and a token $i>j^*$ sent by some node  on the right
    side, let $s_{i,j}$ be the number of nodes $u\in S_i$ for which
    $j\in K'_u$. Note that if token $i$ is in $K'_v$ for some $v\in
    S_j$, $v$ has $s_{i,j}$ neighbors in $S_i$.

    Using the augmentation of the $K'_u$-sets for nodes on the right,
    we have that for every $j\leq j^*$, $\sum_{i>j^*} s_{i,j}\geq
    s/\sqrt{\log n}$. For every $i>j^*$, with probability $1/2$, we
    have $i\in K'_v(0)$. In addition, we add tokens additional $i$ to
    $K'_v$ for which $i\not\in K'_v(0)$ such that in the end,
    $\sum_{i>j^*, i\in K'_v} s_{i,j}\geq 2c$. By Lemma
    \ref{lem:increase}, the probability that we need to add $\geq x$ tokens
    is upper bounded by $2^{-\Theta\left(xs/(c\sqrt{\log
                n})\right)}=2^{-\Theta\left(x\sqrt{\log
                n}\right)}$. As the number of tokens we need to add to
    $K_v'$ is independent for different $v$, in total we need to add
    at most $O\big(\frac{s\log n}{\sqrt{\log n}}\big) =
    O(c\log^{3/2}n)$ tokens with probability at least
    $1-2^{-(\gamma-1)s\log n}$. Note that this is still true after a
    union bound over all the possible ways to distributed the
    $O(c\log^{3/2}n)$ tokens among the $\leq s$ nodes. Using Lemma
    \ref{lem:kconnectmindegree}, we then have to add at most
    $O(s)=O(c\log n)$ additional non-free edges to make the graph
    induced by $S$ $c$-vertex connected.
        
    There are at most $n^s=2^{s\log n}$ ways  to choose the set $S$
    and $k^s=2^{O(s\log n)}$ ways to assign tokens to the nodes in
    $S$. Hence, if we choose $\gamma$ sufficiently large, the
    probability that we need to increase the potential by at most
    $O(c\log^{3/2} n)$ for all sets $S$ and all token assignments is
    positive. The theorem now follows as in the previous lower bounds
    (e.g., as in the proof of Theorem \ref{thm:original}).
\hspace*{\fill}\qed\end{proof}

\section*{Acknowledgments}

We would like to thank Chinmoy Dutta, Gopal Pandurangan, Rajmohan
Rajaraman, and Zhifeng Sun for helpful discussions and for sharing
their work at an early stage.

\clearpage

\bibliographystyle{abbrv}


\end{document}